\newtheorem{assumption}{Assumption}
\newtheorem{observation}{Observation}
\begin{document}
\title{Games in Public Announcement: How to Reduce System Losses in Optimistic Blockchain Mechanisms}
%
\titlerunning{How to Reduce System Losses in Optimistic Blockchain Mechanisms}
%
\author{Siyuan Liu\inst{1} \and Yulong Zeng\inst{2}}
%
%
\institute{School of Software and Microelectronics, Peking University, Beijing, China \and Beijing YeeZTech Ltd,  Beijing, China\\}
%
\maketitle              
\begin{abstract}
Announcement games, where information is disseminated by announcers and challenged by validators, are prevalent in real-world scenarios. Validators take effort to verify the validity of the announcements, gaining rewards for successfully challenging invalid ones, while receiving nothing for valid ones. Optimistic Rollup, a Layer 2 blockchain scaling solution, exemplifies such games, offering significant improvements in transaction throughput and cost efficiency. We present a game-theoretic model of announcement games to analyze the potential behaviors of announcers and validators. We identify all Nash equilibria and study the corresponding system losses for different Nash equilibria. Additionally, we analyze the impact of various system parameters on system loss under the Nash equilibrium. Finally, we provide suggestions for mechanism optimization to reduce system losses.

\keywords{Optimistic Rollup  \and Announcement Game \and Equilibrium.}
\end{abstract}

\section{Introduction}
Announcement games are prevalent in various real-world scenarios. In these games, announcers disseminate essential information on public platforms for specific purposes, such as PhD defenses or tender bidding. The announcement remains effective for a predetermined period, during which any party can challenge its validity. Upon raising an objection, the announcer and the objector engage in a contest stage, where the validity of both the objection and the announcement is scrutinized under supervision. The outcome determines the invalidity of either the announcement or the objection, with the loser incurring penalties and the winner gaining rewards. Importantly, if no valid objection is made during the announcement period, the announcement is deemed approved, even if it is invalid, thereby allowing the announcer to secure a significant advantage and and causing potential system losses.

A concrete example of an announcement game is the optimistic mechanism in the blockchain Layer-2(L2) ecosystem. Layer-1(L1) refers to the base layer of the blockchain architecture, such as Bitcoin or Ethereum's mainnet. L2 solutions, on the other hand, are secondary frameworks or protocols built on top of the L1 blockchain. They aim to enhance the scalability and efficiency of the blockchain without compromising its underlying security. 

L2 solutions achieve this by processing transactions\footnote{A transaction is an instruction to change the state of a blockchain, which can be initiated by anyone.} off the main blockchain and only recording final state on the L1 blockchain. A prominent L2 scaling solution is Optimistic Rollups. This approach allows certain users (known as aggregator) package a certain number of L2 transactions into a block and publish it on the L1, including a  final state indicating the processing result of packaged transactions. The validity of the state cannot be directly verified on L1 due to scalability limitations. However, any user is allowed to challenge a fraud state and provide necessary evidence, before the block is finalized. L1 blockchain is able to determine whether the evidence is correct. If so, the block is discarded and the L1 state reverts. If there is no valid evidence within the announcement period, the block is deemed finalized. Therefore, the optimistic mechanism, with its challenge-based verification process, provides a scalable and efficient solution for ensuring transaction validity, while still remaining a security issue that fraud blocks may be finalized.

Generally speaking, an optimistic rollup game includes the following roles:
\begin{itemize}
    \item \textbf{Aggregator}: The party proposing an L2 block, which can be either valid or invalid, corresponding to honest or attack actions, respectively. If an attack is chosen, the aggregator can specify an unlawfully but public earned income within the L2 block. In this case it is referred to as the attacker.
    \item \textbf{Validator}: The party responsible for verifying the validity of the L2 block proposed by the aggregator. The action space of the validator includes:
    \begin{itemize}
        \item \textbf{Honest-verifier} (verify): The validator actively verifies the validity of the L2 block and issues a challenge if the result of the verification is negative. In either case, the validator needs to bear the verification cost.
        \item \textbf{Free-rider} (no verify and no challenge): The validator neither verifies nor challenges, assuming the L2 block is correct, and has no cost.
        \item \textbf{Chance-taker} (no verify but challenge): The validator does not verify but issues a challenge, assuming the L2 block is incorrect, and has no cost.
    \end{itemize}
    If the validator chooses to verify, it is referred to as the verifier; if the validator chooses to challenge, it is referred to as the challenger. In L2 scenario, both the validator and the aggregator are required to stake a deposit, for penalizing dishonst actions.
\end{itemize}

Announcement games have broad applications, encompassing various domains such as public bidding, tender processes, academic announcements, and blockchain systems like Optimistic Rollup. A common issue in these scenarios is that when an incorrect announcement is finalized, the system incurs substantial losses. For instance, in Optimistic Rollup, if a malicious block is finalized, the attacker gains benefits that are undeserved, undermining the system's stability and credibility. Therefore, achieving lower system losses of announcement games is a crucial research focus, enhancing efficiency and reliability across these diverse fields. Our model addresses the complex behaviors and decision-making processes of participants. The probability of validator verification and the probability of aggregator attacking affect each other, leading to the formation of a Nash equilibrium. The situation becomes more intricate with multiple validators, as each validator's behavior is also affected by others. Based on the above findings, it is of interest to study the system losses in equilibrium and how to reduce these losses by adjusting system parameters and fine-tuning mechanisms.

In summary, the contributions of this paper include:
\begin{enumerate}
\item Modeling the Optimistic Rollup game and solving all Nash equilibria, including cases with a single validator and multiple validators. Moreover, we find that the number of equilibria is simply determined by the value of a certain system parameter.
\item Proposing additional mechanisms based on adjustments to the original optimistic mechanism, including the non-KYC scenario and breaking-tie strategies, aimed at mitigating system losses.
\item Analyzing the impact of different parameters on system losses and providing suggestions to optimize the design and efficiency of the system.
\end{enumerate}

\section{Related Works}
Blockchain technology's rapid growth has highlighted significant scalability issues in L1 blockchains, such as Bitcoin~\cite{blockchain} and Ethereum~\cite{ethereum}, which limit transaction throughput and speed due to their consensus mechanisms~\cite{scale}. L2 scaling solutions, such as State Channels, Plasma, Sidechains, and Rollups, address these limitations by operating on L2 chains to enhance scalability and efficiency~\cite{Louis}. Over the past year, the volume of atomic arbitrage MEV (Maximal Extractable Value) transactions on major L2 networks has exceeded \$3.6 billion, accounting for 1\% to 6\% of all DEX (decentralized exchange) trading volumes~\cite{sui}. Among these, Optimistic Rollups have gained substantial traction, with Rollups currently handling a significant portion of Ethereum's transaction volume~\cite{Thibault}. Optimistic Rollups, such as Arbitrum~\cite{nitro} and Optimism~\cite{optimism}, offer advantages like greater decentralization and compatibility with existing smart contracts by assuming transaction validity unless challenged. However, most current L2 solutions still face centralization issues. To address this, our article provides an in-depth analysis of the Optimistic Rollup system, offering essential theoretical insights for achieving full decentralization in the future.

Related studies in game theory have extensively explored announcement games in various contexts, highlighting the strategic behavior of participants in scenarios involving public announcements~\cite{cui}. Ågotnes et al.~\cite{van} analyze the rational strategies in public announcement games, combining logic and game theory in the study of rational information exchange. Loi Luu et al.~\cite{Loi} analyse the incentivization of validators in blockchain settings. Their study shows that practical attacks exist which either waste miners’ computational resources or lead miners to accept incorrect script results, known as the verifier’s dilemma. These studies also highlight that announcement games are a practical scenario in blockchain. Hans et al.~\cite{Hans} study blockchain security through the lens of game theory, focusing on the design of reward-sharing mechanisms for validation. Another related paper is~\cite{Lars}, where costs of validation differ, and the authors look at the problem of delegating validation.

A concurrent study by Li~\cite{Jiasun} engages with equilibrium in optimistic rollup. The author presents a model of a potential attack on the well functioning of optimistic rollups and concludes that their current design is not secure. However, the analysis does not take the deposit of validators into consideration and erroneously assumes uniform behavior among validators, resulting in flawed conclusions. Similarly, another concurrent study~\cite{Akaki} also investigates equilibrium in optimistic rollup, providing both lower and upper bounds on the optimal number of validators, and advise on optimal design of rewards for optimal design of rewards. However, their model is criticized for its simplicity and lack of consideration for system benefits. Daji et al.~\cite{Daji} first focus on the behavior of chance-taker validators. Their study analysis the equilibrium of a single aggregator and validator, although it does not extend to multi-player scenarios. Our model is optimized based on the aforementioned works. We include the chance-taker as an optional behavior for validators and assume that each validator's behavior is independent. Additionally, we define the benefits of various behaviors for different roles and introduce system rewards. Finally, our model also considers scenarios involving multiple validators.

\section{Model}
Our model mainly consists of two types of roles: the aggregators $\mathcal{A}$ and the validators $\mathcal{V}$. Aggregators $\mathcal{A}$ propose L2 blocks, which can be either valid or invalid. To deter fraudulent proposals, aggregators must stake a deposit, which they forfeit if their block is invalidated. They aim to maximize their earnings by including as many transactions as possible in the proposed blocks. Validators $\mathcal{V}$ verify the validity of the L2 blocks. They have three strategic choices: verifier, free-rider or chance-taker. Validators also act to maximize their utilities. We state some explanations and assumption in this section. We introduce the concept of Know Your Customer(KYC). In the KYC scenario, validators stake a fixed deposit, while in the non-KYC seenario, they can choose their deposit amount.

\subsection{Parameters}
Table~\ref{table:notation} presents the various parameters of the game.
\begin{table}[]
    \caption{The Notation of Parameters}
    \begin{tabular}{|c|l|}
    \hline
    Notations & The Definition of Notations\\
    \hline
    Z & Malicious Block Value \\ 
    S & Aggregator's deposit \\ 
    B & Aggregator reward \\ 
    T & Validator reward \\ 
    n & The number of validators \\ 
    $V$ & the validators's deposit.  \\ 
    C & Cost of validation \\ 
    $\delta$ & The proportion of the attacker's deposit that the challenger receives.\\ 
    $f_p$ & The proportion of penalty deposit when the validator pledges the wrong block.\\ 
    $f_n$ & The proportion of penalty deposit when the validator challenges the right block.\\
    \hline
    \end{tabular}
    \label{table:notation}
    \footnotetext{Note: The cost of the aggregator packaging transactions into the block can be normalized.}
\end{table}

\begin{itemize}
    \item {\itshape Malicious block value $Z$}: The attacker can arbitrarily assign a value to the malicious block, which he gains if that block is finalized. This value includes the aggregator's reward.

    \item {\itshape Deposit $S$ and $V$}: When dishonest players are caught, their deposits are forfeited. The system requires each aggregator to stake a fixed deposit $S$ to penalize malicious behavior. The same applies to validators. In the KYC scenario, we require the verifier to pledge a fixed deposit $V$ on the block. We also introduce the non-KYC scenario, where the validator can arbitrarily choose their deposit\footnote{If a validator wants to leverage his deposit $V$, up to a maximum amount of founds the validator can raise, he can create multiple accounts, which is free in the blockchain system, each staking the required deposit amount. It is known as the Sybil attack. As a result, the validator's payoff is also leveraged while the validation cost remains singular. This is equivalent to validators being able to choose their deposit amounts, with their payoffs directly proportional to their deposit amounts.  }, up to a maximum value $V_{max}$.

    \item {\itshape System reward $B$ and $T$}: If a block is finalized, the system allocates two constant amounts for rewarding the aggregator and the validator set, denoted by the \emph{aggregator reward} $B$ and the \emph{validator reward} $T$. Each validator then receives a portion of the validator reward proportional to their deposit amount. This ensures that system payouts do not exceed a certain amount.

    \item {\itshape Cost $C$}: We assume that the cost for each honest validator is constant, referring to the computational cost of validation. There is no cost for dishonest (free-rider and chance-taker) validators. Additionally, the cost for aggregators is normalized to zero.

    \item {\itshape Penalty proportion of aggregator $\delta$}: When an aggregator behaves maliciously and is challenged by a validator, the aggregator forfeits their deposit $S$, with a portion $\delta S$ awarded to the validator. This ensures that payouts come solely from malicious actors, deterring collusion between aggregators and validators to exploit the system.

    \item {\itshape Proportion of false positive $f_p$ and false negative $f_n$}: If a validator incorrectly pledges a block, a penalty of $f_pV$ is imposed. Similarly, if a validator wrongly challenges a correct block, resulting in a failed challenge, a penalty of $f_nV$ is imposed and awarded to the aggregator.

    By definition, $\delta$, $f_p$, and $f_n$ are all within the range of $[0,1]$.
\end{itemize}

We make some assumptions on these parameters to assure that there is no dominate behavior for each party.

\begin{assumption}
    The validator does not play a dominated strategy, that is $\delta S > T$.
\end{assumption}

We hope validators to challenge incorrect blocks instead of colluding with aggregators, so the reward for finding incorrect blocks should be greater than the reward for confirming blocks, that is $\delta S > T$. Additionally, we do not require $T > C$, for the reason that even the validator has a negative benefit of verifying, he may still choose to verify because once the wrong block is discovered, there will be a positive benefit.

\begin{assumption}
    The aggregator does not play a dominated strategy, that is $Z>B$.
\end{assumption}

For the aggregator, choosing a larger value $Z$ does not incur additional cost, as the difference in block values is only a numerical difference throughout the entire block. Moreover, the block value is a public information because it is on chain. So the validators can choose their behavior according to the block value.

\subsection{Payoff Matrix}
Based on the above analysis, when there is only one aggregator and one validator, the validator will not incur any penalties when choosing the free-rider strategy.

In most scenarios, the chance-taker strategy is dominated because validators must verify to win. In L2 scenarios, precise verification of state transitions is required. Without it, validators can't win challenges even if the block is incorrect, due to the burden of proof. However, in cases where official authorities handle validation, such as reporting cheating in a game, the chance-taker strategy is viable. This paper focuses on the chance-taker case in a single validator game.

Table \ref{table:matrix11} presents a game involving one aggregator and one validator in a bimatrix format.

\begin{itemize}
    \item If the aggregator attacks and the validator challenges, the aggregator loses his deposit and the validator earns a portion of the aggregator's deposit. Additionally, the validator incurs costs if he verifies.
    
    \item If the aggregator attacks and there is no challenger, the aggregator earns the malicious block value, and the validator receives the validator reward.
    
    \item A honest aggregator always earns the aggregator reward. The validator receives the validator reward if he does not challenge. Honest validators incur costs, while free-riders pay nothing. However, if the validator challenges the correct block, he loses a portion of his deposit, which goes to the aggregator.
\end{itemize}

\begin{minipage}{\textwidth}
\begin{minipage}[t]{0.48\textwidth}
\makeatletter\def\@captype{table}
\scalebox{0.8}{
\renewcommand\arraystretch{1.2}
\begin{tabular}{|c|c|c|}
    \hline
    & Not Attack  & Attack  \\ 
    \hline
    Free-rider  & (B, T) & (Z, T) \\ 
    Chance-taker & ($B+f_n V , -f_n V$) & ( $ -S, \delta S $) \\ 
    Verifier & (B, T-C) & ($-S, \delta S -C$) \\ 
    \hline
\end{tabular}}
\caption{Payoff matrix of one aggregator and one validator. The first number indicates the utility for the aggregator, while the second number represents the utility for the validator.}
\label{table:matrix11}
\end{minipage}
\begin{minipage}[t]{0.48\textwidth}
\makeatletter\def\@captype{table}
\scalebox{0.8}{
\renewcommand\arraystretch{1.2}
\begin{tabular}{|c|c|c|c|}
    \hline
    & Not Attack & \multicolumn{2}{c|}{Attack} \\ 
    \hline
    \multirow{2}{*}{Free-rider} & \multirow{2}{*}{$(B, \frac{T}{n})$} & Detected & $(-S, -f_pV)$ \\
    & & Not Detected & $(Z, \frac{T}{n})$ \\
    \hline
    Verifier & $(B, \frac{T}{n}-C)$ & \multicolumn{2}{c|}{$(-S, \frac{\delta S}{m} -C)$} \\
    \hline
\end{tabular}}
\caption{Payoff matrix of one aggregator and $n$ validators, where $m$ of the $n$ validators choose to verify. The first number indicates the utility for the aggregator, while the second number represents the utility for the validator.}
\label{table:matrix1n}
\end{minipage}
\end{minipage}

\subsection{Multiple Players}
We assume that when there are multiple challengers, they equally share the challengers' benefit $\delta S$, as only the first challenger will gain and each challenger has equal probability to be the first one. 

Considering multiple aggregators is pointless because there is no strategic interaction between
aggregators. Validators join the game based on the first
proposed block. If the first proposed block is invalid, the game proceeds with the
second proposed block, and so forth. 

Table~\ref{table:matrix1n} presents a game involving one aggregator and \( n \) validators in a bimatrix format. The first number represents the utility (payoff) for the aggregator, and the second number represents the utility for the validator. We assume that \( m \) out of \( n \) validators choose to verify. We also do not consider the chance-taker case with multiple players.

\begin{itemize}
    \item If the aggregator attacks and is detected by any validator, the aggregator loses his deposit, and each challenger's earnings become part of the aggregator's deposit, distributed evenly among the challengers in expectation.
    
    \item If the aggregator attacks and there is no challenger, the aggregator earns the malicious block value, and validators share the validator reward proportionally to their deposits.
    
    \item If the aggregator is honest, he always earns the aggregator reward, and validators share the validator reward proportionally to their deposits. Honest validators incur costs, while free-riders pay nothing.
\end{itemize}

\section{Equilibrium Analysis}
In this section, we analyze the game in KYC scenario where the validator must deposit a fixed amount of \( V \) dollars on the block. 

\subsection{One validator}
We denote \( \beta \) as the probability of \( \mathcal{A} \) attacking and \( \alpha \) as the probability of \( \mathcal{V} \) verifying. Upon not verifying, \( \mathcal{V} \) challenge with probability \( \gamma \) (so w.p. $1-\alpha-\gamma$, ). Recall the payoff matrix of Table~\ref{table:matrix11} from Section 3.


\begin{lemma} \label{lemma1}
There is no pure strategy equilibrium between $\mathcal A$ and $\mathcal V$.
\end{lemma}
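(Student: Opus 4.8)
The plan is to proceed by a best-response argument and to show that the two players can never simultaneously play mutual best responses, so that no cell of Table~\ref{table:matrix11} can be a pure Nash equilibrium. Since the aggregator has two pure actions (Attack / Not Attack) and the validator has three (Free-rider / Chance-taker / Verifier), there are only six candidate profiles, so in principle one could simply check each one; but computing the two best-response correspondences first makes the argument shorter and exposes the underlying cycle.

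First I would fix the validator's action and read the aggregator's best response off the first coordinate of each payoff pair. If the validator is a Free-rider, the aggregator compares $B$ against $Z$ and, by Assumption~2 ($Z>B$), strictly prefers to Attack. If the validator is a Chance-taker or a Verifier, the aggregator compares $B+f_nV$ or $B$ against $-S$; since $B>0>-S$ and $f_nV\ge 0$, the aggregator strictly prefers Not Attack in both cases. Hence the aggregator is willing to Attack only against a Free-rider.

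Next I would fix the aggregator's action and read the validator's best response off the second coordinate. Against Not Attack the validator compares $T$, $-f_nV$, and $T-C$; since $C>0$ and $f_nV\ge 0$, the unique best response is Free-rider. Against Attack the validator compares $T$, $\delta S$, and $\delta S-C$; by Assumption~1 ($\delta S>T$) together with $C>0$, the unique best response is Chance-taker.

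Finally I would combine the two correspondences to close the loop. If a pure equilibrium had the aggregator playing Attack, the validator's best response would force Chance-taker, but the aggregator's best response to Chance-taker is Not Attack, a contradiction; symmetrically, Not Attack forces the validator to Free-rider, against whom the aggregator's best response is Attack, again a contradiction. Thus no profile survives, and in fact the best responses cycle through (Free-rider, Not Attack) $\to$ (Free-rider, Attack) $\to$ (Chance-taker, Attack) $\to$ (Chance-taker, Not Attack) and back. I do not expect a serious obstacle here; the only point requiring care is that the argument leans on the implicit sign conventions ($B,S,C>0$ and $f_nV\ge 0$) in addition to the two stated assumptions, so I would state those positivity facts explicitly before invoking them.
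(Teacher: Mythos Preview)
Your proposal is correct and follows essentially the same best-response cycling argument as the paper's proof, just organized more systematically: you first compute each player's best-response correspondence and then derive the contradiction from the aggregator's two possible actions, whereas the paper walks through the cycle case by case. Your explicit mention of the implicit positivity assumptions ($B,S,C>0$, $f_nV\ge 0$) is a useful addition that the paper leaves tacit.
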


Due to space limitations, the proofs of the lemmas and theorems in this paper are included in the appendix.

Since there is no pure-strategy equilibrium in the game, we consider the mixed strategy equilibrium. For the aggregator \( \mathcal{A} \), the expected return of attacking should be equal to the expected return of choosing not to attack. For the validator \( \mathcal{V} \), considering that the size of its action space is 3, it can arbitrarily choose two of the behaviors as a mixed strategy, and the expected return of each behavior should also be equal.

\begin{theorem} \label{theorem1}
    There is a Nash Equilibrium that:
    \begin{itemize}
        \item If $\displaystyle C > \frac{(\delta S-T)(T+f_n V)}{\delta S+f_n V}$, $\mathcal A$ attacks with probability $\displaystyle \beta = \frac{T+f_n V}{\delta S+f_n V}$, while $\mathcal V$ challenges with probability $\displaystyle \gamma = \frac{Z-B}{Z+S+\lambda f_n V}$; 

        \item If $\displaystyle C < \frac{(\delta S-T)(T+f_n V)}{\delta S+f_n V}$, $\mathcal A$ attacks with probability $\displaystyle \beta = \frac{C}{\delta S-T}$, while $\mathcal V$ verifies with probability $\displaystyle \alpha = \frac{Z-B}{Z+S}$.

        \item Especially, when $\displaystyle C=\frac{(\delta S-T)(T+f_n V)}{\delta S+f_n V}$, there is an additional equilibrium that $\mathcal A$ can choose to attack with probability $\displaystyle \beta = \frac{T+f_n V}{\delta S+f_n V}$, while $\mathcal V$ can choose whether to verify with probability $\displaystyle \alpha = \frac{Z-B}{Z+S}$ and whether to challenge with probability $\displaystyle \gamma = \frac{Z-B}{Z+S+\lambda f_n V}$ if not verifying. Combining the first two equilibrium, there are three equilibriums in total.
    \end{itemize}
    
\end{theorem}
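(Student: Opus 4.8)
The plan is to find all mixed equilibria through the indifference principle, since Lemma~\ref{lemma1} already rules out pure profiles. Write $\beta$ for $\mathcal A$'s attack probability and $(\alpha,\gamma,1-\alpha-\gamma)$ for $\mathcal V$'s mix over (verifier, chance-taker, free-rider). Reading off Table~\ref{table:matrix11}, the relevant expected payoffs are
\begin{align*}
u_{\mathcal A}(\text{no attack}) &= B+\gamma f_n V, & u_{\mathcal A}(\text{attack}) &= Z-(Z+S)(\alpha+\gamma),\\
u_{\mathcal V}(\text{free}) &= T, & u_{\mathcal V}(\text{chance}) &= -f_n V+\beta(\delta S+f_n V),\\
u_{\mathcal V}(\text{verify}) &= T-C+\beta(\delta S-T).
\end{align*}
A short best-response check shows that if either player were pure, the other's unique best response would also be pure (this is where Assumptions~1 and~2, namely $\delta S>T$ and $Z>B$, enter: each makes one of the opponent's actions strictly best), contradicting Lemma~\ref{lemma1}. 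Hence both players randomise and each is indifferent across the actions in its support. Since $\mathcal V$ has three pure actions, I would organise the search by its possible two- and three-element supports.

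Next I would treat the two supports that yield the stated equilibria. For the support $\{\text{free},\text{chance}\}$ (so $\alpha=0$), the equation $u_{\mathcal V}(\text{free})=u_{\mathcal V}(\text{chance})$ gives $\beta=\frac{T+f_n V}{\delta S+f_n V}$, and $\mathcal A$'s indifference $B+\gamma f_n V=Z-(Z+S)\gamma$ gives $\gamma=\frac{Z-B}{Z+S+f_n V}$ (i.e.\ $\lambda=1$ in the statement). This is a genuine equilibrium exactly when verifying is not a profitable deviation, $u_{\mathcal V}(\text{verify})\le T$, which upon substituting $\beta$ collapses to $C\ge\frac{(\delta S-T)(T+f_n V)}{\delta S+f_n V}$. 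Symmetrically, for the support $\{\text{free},\text{verify}\}$ (so $\gamma=0$), $u_{\mathcal V}(\text{free})=u_{\mathcal V}(\text{verify})$ gives $\beta=\frac{C}{\delta S-T}$, $\mathcal A$'s indifference gives $\alpha=\frac{Z-B}{Z+S}$, and the no-deviation requirement $u_{\mathcal V}(\text{chance})\le T$ collapses to the complementary inequality $C\le\frac{(\delta S-T)(T+f_n V)}{\delta S+f_n V}$. Along the way I would verify $\alpha,\beta,\gamma\in[0,1]$: Assumption~1 keeps the numerator of each $\beta$ below its denominator, Assumption~2 makes $\alpha,\gamma$ strictly between $0$ and $1$, and the case inequality on $C$ ensures $\beta=C/(\delta S-T)\le 1$.

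It then remains to dismiss the other supports and to handle the boundary. The support $\{\text{chance},\text{verify}\}$ forces $\alpha+\gamma=1$, turning $\mathcal A$'s indifference into $B+\gamma f_n V=-S$, which is unsolvable for $\gamma\ge 0$ since $B,S\ge 0$; this support is therefore empty. A full three-action support would require $u_{\mathcal V}$ to coincide on all three actions at once, i.e.\ $\frac{T+f_n V}{\delta S+f_n V}=\beta=\frac{C}{\delta S-T}$, which holds only when $C=\frac{(\delta S-T)(T+f_n V)}{\delta S+f_n V}$. On that boundary $\mathcal V$ is indifferent among all three actions at $\beta=\frac{T+f_n V}{\delta S+f_n V}$; both the first-case and second-case profiles remain equilibria, and the single aggregator-indifference relation $(Z+S)\alpha+(Z+S+f_n V)\gamma=Z-B$ describes the further validator mixes that sustain equilibrium, yielding the additional equilibrium recorded in the third bullet (so that the two cases coexist at the threshold).

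The main obstacle I anticipate is not any single computation but the exhaustiveness and correctness of the support bookkeeping: one must argue the enumeration is complete, verify simultaneously that every candidate is feasible (all probabilities in $[0,1]$) and stable (every off-support action is weakly worse), and --- most error-prone --- match the $C$-threshold to the correct support rather than its opposite. Cleanly eliminating the $\{\text{chance},\text{verify}\}$ support and invoking Assumptions~1 and~2 at precisely the points where positivity and the bound $\beta\le 1$ are needed is where the argument must be handled with the most care.
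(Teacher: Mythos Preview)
Your proposal is correct and follows essentially the same route as the paper: enumerate $\mathcal V$'s possible supports, solve the indifference conditions on each, and record the off-support no-deviation inequality that yields the $C$-threshold. If anything you are more careful than the paper, which never explicitly rules out the $\{\text{chance},\text{verify}\}$ support or checks that the resulting probabilities lie in $[0,1]$; your additional observations (the infeasibility $B+\gamma f_nV=-S$ and the use of Assumptions~1--2 for feasibility) fill those gaps.
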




\begin{proof}[Sketch]
    In equilibrium, the aggregator and the validator are indifferent between their each behaviors. We obtain the result of Theorem~\ref{theorem1} by calculating the utilities corresponding to all possible behaviors of the aggregator and the validator. See the Appendix for the specific proof.
\end{proof}

We can learn from the equilibrium above that system loss mainly comes from when a malicious aggregator proposes an incorrect block that has not been verified by the verifier. The system loss of one aggregator and one validator is, 
\begin{equation}
    \mathcal{L}=\beta(1-\alpha)Z=\frac{C}{\delta S-T} \frac{(S+B)Z}{S+Z}
    \label{loss}
\end{equation}

Consider Eq.(\ref{loss}) as a function related to $Z$ and take its derivative, that is, 
    \begin{equation}
        \mathcal{L}'(Z) = \frac{C(S+B)}{\delta S-T} \frac{S}{(S+Z)^2}
    \label{loss'}
    \end{equation}

From Eq.(\ref{loss'}), we can learn that $\mathcal{L}'(Z) > 0$ always holds. So as $Z$ increases, $\mathcal{L}$ increases. 

\subsection{Extension to two validators}

When there are two validators, the payoff logic changes depending on the action of the other validator. To simplify the model, we do not consider the case of chance-takers when there are two or more validators. We denote the two validators as \( \mathcal{V}_1 \) and \( \mathcal{V}_2 \), with the probabilities of verifying being \( \alpha_1 \) and \( \alpha_2 \), respectively.

\begin{theorem} \label{theorem2}
    Denote $\displaystyle R=\frac{\frac{T}{2}+f_pV}{\delta S}$. There is an equilibrium that both $\mathcal V_1$ and $\mathcal V_2$ play the mixed strategy that they verify with probability $\displaystyle \alpha = 1-\sqrt{\frac{B+S}{Z+S}}$, while $\mathcal A$ attacks with probability $\displaystyle \beta_1 = \frac{C}{\delta S (1-\frac{1}{2} \alpha) + \alpha(\frac{1}{2}T+f_pV) - \frac{1}{2}T}$.
    
    As $\displaystyle R \le \frac{1}{2}$, in addition to the above equilibrium, there is another equilibrium that $\mathcal V_1$ plays the mixed strategy verifying with probability $\displaystyle \alpha = \frac{Z-B}{Z+S}$, and $\mathcal V_2$ plays as a free-rider, while $\mathcal A$ attacks with probability $\displaystyle \beta_2 = \frac{C}{\delta S - \frac{T}{2}}$.
\end{theorem}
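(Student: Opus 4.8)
The plan is to compute, for each player, the expected payoff of every available action as a function of the opponents' mixing probabilities, and then impose the appropriate indifference (and, for the second equilibrium, best-response) conditions. First I would fix a generic validator $\mathcal V_i$ and write its expected utility from verifying and from free-riding, given that the other validator verifies with probability $\alpha_j$ and $\mathcal A$ attacks with probability $\beta$. The only delicate point here is the reward sharing: when $\mathcal A$ attacks and is caught, the bounty $\delta S$ is split only among the \emph{verifiers}, so a verifier earns $\delta S/2 - C$ when the other also verifies but $\delta S - C$ when it does not; meanwhile a free-rider whose pledged block turns out fraudulent is penalized $-f_pV$. Reading these off Table~\ref{table:matrix1n} with $n=2$ yields
\begin{align}
U_i^{\mathrm{ver}} &= (1-\beta)\tfrac{T}{2} + \beta\,\delta S\bigl(1-\tfrac{\alpha_j}{2}\bigr) - C, \\
U_i^{\mathrm{free}} &= (1-\beta)\tfrac{T}{2} + \beta(1-\alpha_j)\tfrac{T}{2} - \beta\alpha_j f_pV,
\end{align}
while the aggregator earns $B$ by not attacking and $(1-\alpha_1)(1-\alpha_2)Z - \bigl(1-(1-\alpha_1)(1-\alpha_2)\bigr)S$ by attacking, since it keeps $Z$ exactly when no validator verifies and forfeits $S$ otherwise.

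For the symmetric equilibrium I would set $\alpha_1=\alpha_2=\alpha$ and require both players to be indifferent. The validator indifference $U_i^{\mathrm{ver}}=U_i^{\mathrm{free}}$ cancels the common term $(1-\beta)\tfrac{T}{2}$ and, after collecting $\beta$, solves to $\beta_1 = C/\bigl[\delta S(1-\tfrac{\alpha}{2}) + \alpha(\tfrac{T}{2}+f_pV) - \tfrac{T}{2}\bigr]$, matching the statement. The aggregator indifference reduces to $(1-\alpha)^2(Z+S)=B+S$, giving $\alpha = 1-\sqrt{(B+S)/(Z+S)}$; Assumption~2 ($Z>B$) places this in $(0,1)$, and Assumption~1 ($\delta S>T$) keeps the denominator of $\beta_1$ positive so that $\beta_1$ is a valid probability. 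Because both validators mix, their indifference makes every pure action a best response, so no further condition is needed for this equilibrium.

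For the asymmetric equilibrium I would instead fix $\mathcal V_2$ as a pure free-rider, i.e. $\alpha_2=0$. Substituting into $U_1^{\mathrm{ver}}=U_1^{\mathrm{free}}$ collapses the denominator to $\delta S-\tfrac{T}{2}$, giving $\beta_2 = C/(\delta S-\tfrac{T}{2})$, and the aggregator indifference with $\alpha_2=0$ reduces to $(1-\alpha_1)Z-\alpha_1 S=B$, i.e. $\alpha_1=(Z-B)/(Z+S)$, as claimed. The genuinely new requirement — and the step I expect to be the main obstacle — is verifying that free-riding is actually a best response for $\mathcal V_2$ against $(\alpha_1,\beta_2)$, i.e. $U_2^{\mathrm{free}}\ge U_2^{\mathrm{ver}}$. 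Substituting $\beta_2$, the common terms cancel and, after dividing by $\alpha_1>0$, the inequality reduces precisely to $\tfrac{\delta S}{2}\ge \tfrac{T}{2}+f_pV$, i.e. $R\le\tfrac12$. Thus the incentive constraint for the dormant validator is exactly the threshold stated in the theorem, which is why the second equilibrium appears only when $R\le\tfrac12$.

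What remains is routine: checking that all four computed quantities lie in $[0,1]$ under Assumptions~1 and~2, the only non-obvious case being positivity of the $\beta_1$ denominator, which follows from $\delta S>T$. The conceptual content is entirely in the two indifference systems together with the single best-response inequality, whose clean reduction to $R\le\tfrac12$ is the crux of the argument.
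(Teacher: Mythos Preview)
Your proposal is correct and follows essentially the same route as the paper: write the payoff of each action for a validator conditional on the other validator's mixing probability $\alpha_j$ and on $\beta$, impose the aggregator's indifference $(1-\alpha_1)(1-\alpha_2)(Z+S)=B+S$, solve the symmetric system for $(\alpha,\beta_1)$, then set $\alpha_2=0$ to obtain $(\alpha_1,\beta_2)$ and reduce $\mathcal V_2$'s best-response inequality to $\tfrac{\delta S}{2}\ge \tfrac{T}{2}+f_pV$, i.e.\ $R\le\tfrac12$. The only cosmetic difference is that you package the computation into the two functions $U_i^{\mathrm{ver}},U_i^{\mathrm{free}}$ up front, whereas the paper writes out separate payoff tables for the two cases; the algebra and the key reduction of the free-rider constraint to $R\le\tfrac12$ are identical.
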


\begin{proof}[Sketch]
    We first discuss that the case where both validators play pure strategies does not exist. Then we obtain the equilibria of the other cases similar to Theorem~\ref{theorem1}. See the Appendix for the specific proof.
\end{proof}

The system loss of the equilibrium where both validators play the mixed strategy, denoted as \( \mathcal{L}_1 \), is:
\begin{equation*}
    \mathcal{L}_1 = \beta_1 \prod_{i=1}^2 (1 - \alpha_i) Z = \frac{C}{\delta S \left(1 - \frac{1}{2} \alpha \right) + \alpha \left(\frac{1}{2} T + f_p V \right) - \frac{1}{2} T} \frac{(S + B) Z}{S + Z}.
\end{equation*}

Similarly, the system loss of the equilibrium where one validator plays the mixed strategy while the other validator free-rides, denoted as \( \mathcal{L}_2 \), is:
\begin{equation*}
    \mathcal{L}_2 = \beta_2 \prod_{i=1}^2 (1 - \alpha_i) Z = \frac{C}{\delta S - \frac{1}{2} T} \frac{(S + B) Z}{S + Z}.
\end{equation*}

Since \( \beta_1 > \beta_2 \), we can deduce that \( \mathcal{L}_1 > \mathcal{L}_2 \). Therefore, the system tends to set \( R \le \frac{1}{2} \) to minimize system loss.

\subsection{Extension to $n$ validators}
Based on the above analysis, we extend this model to multiplayer games. As mentioned earlier, only one correct block will eventually be finalized, so we still consider only one aggregator and scale the validator size to \( n \).

\begin{proposition}
    In equilibrium, the aggregator does not play a pure strategy.
    \label{propos1}
\end{proposition}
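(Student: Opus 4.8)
The plan is to argue by contradiction: suppose the aggregator plays a pure strategy, so that $\beta=0$ (always honest) or $\beta=1$ (always attack), and derive a contradiction in each case by examining the validators' induced best responses. The guiding intuition is a best-response cycle: an honest aggregator makes verification pointless, lazy validators then make attacking profitable, an attacking aggregator makes verification attractive, and vigilant validators then make honesty profitable. Since the induced incentives keep flipping, no pure aggregator action can be a fixed point of the best-response map.

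First I would dispose of the case $\beta=0$. If the proposed block is always valid, then for every validator verifying strictly wastes the cost $C>0$ while earning nothing extra over free-riding, because there is never a fraudulent block to catch and both actions collect the same share $T/n$. Hence free-riding strictly dominates, every validator free-rides, and the detection probability is $1-\prod_i(1-\alpha_i)=0$. Facing no risk of being caught, the aggregator compares the attack payoff $Z$ against the honest payoff $B$, reads off from Table~\ref{table:matrix1n} that attacking is strictly better by the assumption $Z>B$, and therefore deviates, contradicting $\beta=0$. This direction is routine.

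The main work, and the step I expect to be the obstacle, is the case $\beta=1$. Reading the aggregator's payoffs from Table~\ref{table:matrix1n}, attacking is a best response only if the detection probability $p_{\mathrm{det}}=1-\prod_i(1-\alpha_i)$ satisfies $Z-p_{\mathrm{det}}(Z+S)\ge B$, i.e. $p_{\mathrm{det}}\le \frac{Z-B}{Z+S}<1$. Consequently $\prod_i(1-\alpha_i)>0$, so at least one validator, say $\mathcal V_i$, free-rides with positive probability and must therefore be best-responding when free-riding. I would then show this is impossible: against a surely-fraudulent block a verifier always catches it and collects a share of $\delta S$, whereas a free-rider earns at most the passive reward $T/n$ (and the penalty $-f_pV\le 0$ whenever someone else detects the block). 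Since the assumption $\delta S>T\ge T/n$ makes detection the lucrative action, $\mathcal V_i$ should strictly prefer to verify, contradicting $p_{\mathrm{det}}<1$ and hence ruling out $\beta=1$.

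The delicate point in that last step is reward-sharing: when $m$ validators verify, each verifier collects only $\delta S/m-C$, so I must check that ``verify beats free-ride'' survives the worst-case split, and in particular rule out the degenerate profile in which all validators free-ride — where a lone deviating verifier collects the full $\delta S$ but pays the whole cost $C$. This is exactly where the quantitative assumptions must enter, since the comparison reduces to whether $\delta S/m-C$ exceeds $T/n$ for the relevant $m$; I expect to need that catching fraud is individually worthwhile in the operating regime identified for the two-validator analysis in Theorem~\ref{theorem2}. Establishing the lower bound $p_{\mathrm{det}}>\frac{Z-B}{Z+S}$ under the simultaneous best responses of $n$ symmetric validators, while controlling this free-riding temptation, is the part I anticipate will require the most care.
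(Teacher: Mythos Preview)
Your approach---case split on $\beta\in\{0,1\}$ followed by a best-response cycle in each case---is exactly the paper's; indeed the paper's entire justification for Proposition~\ref{propos1} is the informal two-sentence paragraph immediately following the statement, so you have already supplied more detail than the source.

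On the reward-sharing subtlety you flag in the $\beta=1$ branch: the paper does not engage with it. The implicit assumption is the same one underpinning the paragraph after Proposition~\ref{propos2}, namely that a lone verifier facing a certain attack strictly prefers verifying to free-riding (effectively $\delta S - C > T/n$), which the paper treats as part of the operating regime rather than as a stated hypothesis. Your concern is legitimate, but it is not a gap relative to the paper---your argument is already at least as complete as theirs. One small clarification: the all-free-ride profile you single out is actually the \emph{best} case for a deviating verifier (she collects the full $\delta S$ unshared); the genuinely adversarial split is when many others verify, driving the share toward $\delta S/n - C$, and it is there that the comparison with $T/n$ or $-f_pV$ needs the implicit parameter bound.
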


If the aggregator always attacks, validators will challenge the malicious blocks, causing the aggregator to lose their deposit and switch to honest behavior. Conversely, if the aggregator never attacks, validators will free-ride, prompting the aggregator to start attacking for higher revenue.

\begin{proposition}
    In equilibrium, with $n$ validators, there is no strategy where all validators are free-riders or one validator purely verifies.
    \label{propos2}
\end{proposition}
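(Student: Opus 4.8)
The plan is to derive both impossibilities as corollaries of Proposition~\ref{propos1}, which guarantees that in any equilibrium the aggregator mixes strictly ($0<\beta<1$). The unifying idea is that each of the two forbidden validator configurations pins the aggregator's probability of being caught to an extreme value, which in turn forces the aggregator's unique best response to be a pure strategy, contradicting Proposition~\ref{propos1}. Crucially, this lets me avoid analysing the validators' own deviation incentives: it suffices to show that the aggregator is driven to a pure action.

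For the all-free-rider case, I would first observe that if every validator free-rides then the number of verifiers is $m=0$, so from Table~\ref{table:matrix1n} an attack is never detected. Hence attacking yields the aggregator $Z$ with certainty while behaving honestly yields $B$. Since $Z>B$ by the standing assumption that the aggregator has no dominated strategy, attacking strictly dominates, so the aggregator's unique best response is $\beta=1$, a pure strategy. This contradicts Proposition~\ref{propos1}, ruling out any equilibrium in which all validators free-ride.

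For the pure-verifier case, suppose some validator verifies with probability $1$. Then at least one verifier is present in every realisation, independently of what the remaining (possibly mixed) validators do, so every attack is detected for sure. Consequently attacking yields the aggregator $-S$ while behaving honestly yields $B$; since $B\ge 0>-S$ (the reward is nonnegative and the deposit is strictly positive), the aggregator strictly prefers honesty, giving the unique best response $\beta=0$, again a pure strategy and again contradicting Proposition~\ref{propos1}.

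The step that needs the most care --- the nearest thing to an obstacle in an otherwise short argument --- is justifying that the detection event depends only on the \emph{existence} of a verifier, so that a single sure verifier forces detection with probability one while a complete absence of verifiers forces non-detection with probability one, regardless of the mixed play of the other $n-1$ validators. Once this ``exists-a-verifier'' reading of the Detected/Not-Detected rows of Table~\ref{table:matrix1n} is fixed, the two sign comparisons ($Z>B$ and $B>-S$) are immediate, and the contradiction with Proposition~\ref{propos1} closes both cases.
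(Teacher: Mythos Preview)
Your argument is correct and is actually cleaner than the paper's own treatment. The paper only sketches the all-free-riders case, and does so via a \emph{validator} deviation: if every validator free-rides the aggregator attacks with certainty, and then some validator would strictly prefer to switch to verifying. The pure-verifier case is not argued explicitly in the paper at all. By contrast, you handle both cases uniformly by reducing to Proposition~\ref{propos1}: either extreme validator configuration forces the aggregator's unique best response to be pure ($\beta=1$ when no one verifies, $\beta=0$ when someone surely verifies), which is already ruled out in equilibrium. This avoids analysing validator incentives altogether and sidesteps the implicit parameter condition (roughly $\delta S - C > T/n$) that the paper's validator-deviation sketch would need to make rigorous. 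Your reading of Table~\ref{table:matrix1n}---that detection is precisely the event ``at least one verifier is present''---is the intended one, so the step you flag as delicate is indeed immediate from the model.
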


Assuming that all validators are free-riders, the aggregator has a strong incentive to attack. In this scenario, if any validator switches their behavior from a free-rider to a verifier, they will achieve greater utility. Consequently, this situation cannot constitute an equilibrium.


Building on Proposition~\ref{propos1} and Proposition~\ref{propos2}, we explore mixed strategy behaviors. \textbf{Suppose that there are \( n \) validators, and \( m \) of these \( n \) validators play the mixed strategy "verifier or free-rider" while the remaining \( n - m \) parties play the strategy "free-rider". For the \( i \)-th party of the \( m \) mixed strategy parties, the probability of verifying is \( \alpha_i \). Denote \( \beta \) as the probability of the aggregator attacking.}

\begin{lemma}\label{lemma2}
    In equilibrium with one aggregator and $n$ validators, the behavior of the validator group adheres to the following rules:
    \begin{enumerate}
        \item $k$ validators adopt a mixed strategy with probability $\alpha_1$;
        \item $m-k$ validators adopt a mixed strategy with probability $\alpha_2$;
        \item $n-m$ validators play a pure strategy as free-riders.
    \end{enumerate}
    A special case arises when $\alpha_1 = \alpha_2$, indicating that all validators employing a mixed strategy are symmetric.
\end{lemma}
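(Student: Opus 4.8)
The plan is to reduce the lemma to the indifference conditions of the mixing validators together with a single monotonicity property of the marginal benefit of verifying. First I would fix a mixing validator $i$ and write its indifference condition between ``verify'' and ``free-ride''. Because the game is anonymous — a validator's payoff depends on the others only through how many of them verify — this condition takes the form $E[g(X_{-i})]=0$, where $X_{-i}$ is the (random) number of \emph{other} validators who verify and $g(x)$ is the net payoff of verifying minus free-riding when exactly $x$ other validators verify. Reading off Table~\ref{table:matrix1n}, one computes $g(x)=-C+\beta\left(\frac{\delta S}{x+1}+f_pV\right)$ for $x\ge 1$ and $g(0)=-C+\beta\left(\delta S-\frac{T}{n}\right)$. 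The one fact I will really use is that $g$ is \emph{strictly decreasing on the integers $x\ge 1$}: indeed $g(x+1)-g(x)=-\beta\delta S/[(x+1)(x+2)]<0$.

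Next I would compare two mixing validators. Writing $X_{-i}=B_j+W_{ij}$ and $X_{-j}=B_i+W_{ij}$, where $B_i\sim\mathrm{Ber}(\alpha_i)$, $B_j\sim\mathrm{Ber}(\alpha_j)$, and $W_{ij}=\sum_{l\ne i,j}B_l$, subtracting their indifference conditions and using independence of $B_i,B_j$ from $W_{ij}$ gives $(\alpha_j-\alpha_i)\,E[g(1+W_{ij})-g(W_{ij})]=0$. Hence any two mixing validators either share the same verification probability or jointly satisfy one common scalar equation; this already exhibits the two-group structure but does not yet forbid a third distinct value.

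The crux is ruling out three distinct mixing probabilities. I would take three mixing validators $i,j,k$ and peel them all out of the common background $Y=\sum_{l\ne i,j,k}B_l$. Expanding each pairwise condition by conditioning on the remaining two peeled Bernoullis turns the three conditions into $\Delta_0+\alpha_t(\Delta_1-\Delta_0)=0$ for $t\in\{i,j,k\}$, where $\Delta_0=E[g(1+Y)-g(Y)]$ and $\Delta_1=E[g(2+Y)-g(1+Y)]$ depend only on $Y$. If $\alpha_i,\alpha_j,\alpha_k$ were pairwise distinct, subtracting pairs forces $\Delta_1-\Delta_0=0$ and then $\Delta_0=0$, so $\Delta_1=0$ as well. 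But $\Delta_1$ is an average of strictly negative increments — every argument $1+Y,\,2+Y$ lies in the range $x\ge 1$ where $g$ strictly decreases — so $\Delta_1<0$, a contradiction. Therefore at most two distinct verification probabilities can appear among the mixing validators, which is exactly the claimed structure, with the symmetric case $\alpha_1=\alpha_2$.

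I expect the main obstacle to be the bookkeeping in the third step: one must isolate $i,j,k$ simultaneously and verify that $\Delta_0,\Delta_1$ are genuinely common to all three pairwise equations, and then check that the quantity forced to vanish, $\Delta_1$, only ever evaluates $g$ on the region $x\ge1$ where strict monotonicity holds. It is precisely the possibly-positive jump $g(1)-g(0)$ that both permits two groups (it can make $\Delta_0$ vanish) and must be kept out of $\Delta_1$; getting the payoff accounting for the ``no other verifier'' event right, so that $g(0)$ is treated separately from $g(x)$ with $x\ge1$, is the step where a sign slip would break the argument.
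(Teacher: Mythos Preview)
Your proposal is correct and takes essentially the same route as the paper: both peel three mixing validators with putatively distinct probabilities out of a common background, reduce their indifference conditions to affine equations $\Delta_0+\alpha_t(\Delta_1-\Delta_0)=0$ (in the paper's notation the two coefficients are $(E_2-E_1)\delta S+P_0(f_pV+\tfrac{T}{n})$ and $(E_3-2E_2+E_1)\delta S-P_0(f_pV+\tfrac{T}{n})$), and obtain a contradiction by showing that pairwise distinctness forces $\Delta_1=0$ whereas in fact $\Delta_1<0$ (equivalently, the paper forces $E_2=E_3$ while $E_2>E_3$). Your probabilistic packaging via the strictly decreasing function $g$ on $x\ge 1$ is a cleaner rendering of the paper's explicit sum manipulations, and your remark that the possibly-positive jump $g(1)-g(0)$ is precisely what allows $\Delta_0$ to vanish---and hence permits exactly two groups---is a nice conceptual point the paper leaves implicit.
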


According to Lemma \ref{lemma2}, validators choosing a mixed strategy exhibit at most two distinct verification probabilities. We define a single mixed strategy probability as the symmetric case and two different probabilities as the asymmetric case. The following sections will analyze these cases separately.

\subsubsection{The Symmetric Case}
Before analyzing the equilibrium in the symmetric case, it is important to introduce the following definition to provide a clear understanding of the terms used in our analysis.
\begin{definition}
$m$-NE Definition: $m$ is termed an ``m-Nash Equilibrium" ($m$-NE) if there is a Nash Equilibrium with $m$ validators using mixed strategies and $n - m$ validators being free-riders.
\end{definition}


\begin{definition}
    We define the following variables:
    \begin{subequations}
        \begin{align*}
            & A = \frac{B+S}{Z+S}; \\
            & \displaystyle R = \frac{\frac{T}{n}+f_pV}{\delta S};\\
            & \displaystyle P_m = \frac{1-A}{m\alpha_m}, \displaystyle Q_m = \frac{A}{1-\alpha_m};\\
            & \Delta_m = \frac{P_m-P_{m+1}}{Q_m-Q_{m+1}};\\
            & \Gamma_m = \left[\frac{1}{m(m+1)}\left(\frac{1}{A}-1\right)-\frac{\alpha_m}{m+1}\right]\frac{1-\alpha_m}{\alpha_m^2}(m>0).\\
        \end{align*}
    \end{subequations}  
\end{definition}

It is not difficult to prove that both $P_m$ and $Q_m$ are within the range of $(0,1)$ and both are increasing functions. Especially, when m approaches zero, $\Gamma_m$ also approaches zero, so we stipulate $\Gamma_0 = 0$. We now proceed to state the equilibrium for the case where there is one aggregator and \(n\) validators, with \(m\) in \(n\) validators playing the mixed strategy symmetrically.  We denote $\alpha_m$ as the probability of each verifier when there are $m$ mixed strategy validators.

\begin{lemma} \label{lemma3}
    In the equilibrium of the symmetric case, the following conditions must be met:
    \begin{subequations}
    \begin{align}
        & \beta (P_m \delta S - Q_m (f_pV+\frac{T}{n}) + f_pV) = C; \\
        & R \le \Gamma_m \label{r<gamma}\ (if\  m < n);\\
        & A=\frac{B+S}{Z+S}=(1-\alpha_m)^m.
    \end{align}
    \end{subequations}
\end{lemma}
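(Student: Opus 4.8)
The plan is to work from the indifference conditions that characterize a mixed-strategy equilibrium, exactly as in the proof sketches of Theorems~\ref{theorem1} and~\ref{theorem2}. In the symmetric case, the $m$ mixed-strategy validators each verify with probability $\alpha_m$, the remaining $n-m$ are free-riders, and the aggregator attacks with probability $\beta$. I would organize the argument around three requirements: the aggregator's indifference between attacking and not attacking, each mixed-strategy validator's indifference between verifying and free-riding, and the consistency/stability condition that a free-rider (or an additional verifier) does not wish to deviate. These three requirements should produce conditions (a), (c), and~\eqref{r<gamma} respectively.

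First I would derive condition~(c). Using the payoff matrix of Table~\ref{table:matrix1n}, a validator who free-rides earns nothing bad only when the attack goes undetected; the event that an attacking block escapes detection is precisely the event that none of the $m$ verifiers catch it, which happens with probability $(1-\alpha_m)^m$. Setting the aggregator's expected payoff from attacking equal to that from behaving honestly, the attack is worthwhile only when the block survives, and the break-even probability of survival is $A=\frac{B+S}{Z+S}$. Hence I would show that the aggregator's indifference forces $(1-\alpha_m)^m = A$, which is condition~(c). This is the cleanest of the three and follows by the same computation that produced $\alpha=1-\sqrt{(B+S)/(Z+S)}$ in Theorem~\ref{theorem2} (the $m=2$ specialization).

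Next I would derive condition~(a) from a mixed-strategy validator's indifference between verifying and free-riding. Writing out the expected utility of each action in terms of $\beta$, $\alpha_m$, $\delta S$, $f_pV$, $T/n$, and $C$, and collecting terms, I expect the difference of the two expected utilities to be an affine function of $\beta$ whose vanishing gives $\beta\bigl(P_m\,\delta S - Q_m(f_pV+\tfrac{T}{n}) + f_pV\bigr) = C$. The bookkeeping here is the main obstacle: one must correctly track the expected reward $\tfrac{\delta S}{(\text{number of challengers})}$ conditional on this validator verifying (so it is one of the challengers when the block is bad) and average over how many of the \emph{other} $m-1$ verifiers also catch the fraud, as well as the penalty $f_pV$ incurred by a free-rider on an undetected bad block. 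The definitions $P_m=\frac{1-A}{m\alpha_m}$ and $Q_m=\frac{A}{1-\alpha_m}$ are evidently chosen to package exactly these conditional-expectation terms, so the work is to verify that the combinatorial averaging collapses into $P_m$ and $Q_m$ after substituting $A=(1-\alpha_m)^m$.

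Finally I would establish the inequality~\eqref{r<gamma}, which is the no-profitable-deviation condition ensuring that the purported $m$-NE is genuinely an equilibrium when $m<n$: none of the $n-m$ free-riders should strictly prefer to become a verifier (equivalently, $\Gamma_m$ is the threshold at which a free-rider's incentive to join flips sign). The approach is to compute the marginal gain to a single free-rider from switching to verifying, holding the others fixed, evaluate it at the equilibrium $\beta$ from~(a) and $\alpha_m$ from~(c), and show the switch is unprofitable precisely when $R\le\Gamma_m$. I expect $\Gamma_m$ to emerge as the algebraic rearrangement of this marginal condition after eliminating $\beta$ via~(a), with the stipulation $\Gamma_0=0$ matching the degenerate boundary case. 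The likely difficulty, beyond the algebra, is confirming that the same inequality simultaneously rules out an existing verifier wanting to quit; I would argue that the indifference~(a) already handles the verifiers' own deviations, so only the free-rider-to-verifier direction needs the extra constraint~\eqref{r<gamma}.
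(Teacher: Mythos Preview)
Your plan is correct and matches the paper's proof: condition~(c) comes from the aggregator's indifference, condition~(a) from a mixed-strategy validator's indifference between verifying and free-riding, and~\eqref{r<gamma} from the requirement that a pure free-rider not strictly prefer to verify, with the paper combining the latter two equations to isolate $R\le\Gamma_m$ just as you describe. One small correction to your bookkeeping: the penalty $f_pV$ hits a free-rider when the malicious block \emph{is} detected by another validator (see Table~\ref{table:matrix1n}), not when it goes undetected; with that fix your outline is exactly the paper's argument.
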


\begin{proposition}
    In m-NE, the probability of the aggregator attacking is $\displaystyle \beta = \frac{C}{P_m \delta S - Q_m (f_pV+\frac{T}{n}) + f_pV}$, and the probability of each mixed strategy validator is $\displaystyle \alpha_m = 1- \sqrt[m]{\frac{B+S}{Z+S}}$.
    \label{proposition3}
\end{proposition}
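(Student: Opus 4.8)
The plan is to read both quantities directly off the equilibrium characterization in Lemma~\ref{lemma3}, treating Proposition~\ref{proposition3} as nothing more than the explicit solution of the two defining equations stated there. Those conditions are coupled in a triangular fashion: the last condition, $A = \frac{B+S}{Z+S} = (1-\alpha_m)^m$, determines the common verification probability $\alpha_m$ on its own, and once $\alpha_m$ (hence $P_m$ and $Q_m$) is fixed, the first condition, $\beta\bigl(P_m \delta S - Q_m(f_pV + \frac{T}{n}) + f_pV\bigr) = C$, determines the attack probability $\beta$. I would therefore solve them in that order.

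First I would solve for $\alpha_m$. Starting from $A = \frac{B+S}{Z+S} = (1-\alpha_m)^m$, Assumption~2 gives $Z > B$, so $A \in (0,1)$ and its positive real $m$-th root is well defined and again lies in $(0,1)$. Taking $m$-th roots yields $1 - \alpha_m = \sqrt[m]{A} = \sqrt[m]{\tfrac{B+S}{Z+S}}$, whence $\alpha_m = 1 - \sqrt[m]{\tfrac{B+S}{Z+S}}$, a legitimate probability in $(0,1)$, exactly as claimed.

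Next I would solve for $\beta$. With $\alpha_m$ (and therefore $P_m = \tfrac{1-A}{m\alpha_m}$ and $Q_m = \tfrac{A}{1-\alpha_m}$) now fixed, the first condition is a single linear equation in $\beta$. Dividing through by the bracketed coefficient immediately gives $\beta = \frac{C}{P_m \delta S - Q_m(f_pV + \frac{T}{n}) + f_pV}$. Because $\alpha_m$ and $\beta$ are each pinned down uniquely by these two equations, this also shows the stated profile is the unique symmetric $m$-NE.

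The only non-mechanical point, and the step I expect to need the most care, is confirming that the derived $\beta$ is a genuine probability rather than a mere formal expression: one must check that the denominator $P_m \delta S - Q_m(f_pV + \tfrac{T}{n}) + f_pV$ is positive (so $\beta > 0$) and at least $C$ (so $\beta \le 1$). I expect positivity to follow by substituting the explicit forms of $P_m$ and $Q_m$ together with Assumption~1 ($\delta S > T$), and the upper bound to reduce to the remaining feasibility inequality $R \le \Gamma_m$ of Lemma~\ref{lemma3} in the case $m < n$, which ties the admissibility of the $m$-NE back to that condition.
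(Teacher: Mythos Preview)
Your approach is correct and is essentially what the paper does: Proposition~\ref{proposition3} is not given a standalone proof in the paper but is simply the explicit solution of the two defining equations in Lemma~\ref{lemma3}, read off exactly as you describe. The one place you go beyond the paper is in worrying about whether $\beta\in(0,1]$; the paper never verifies this and in particular does not tie $\beta\le 1$ to the condition $R\le\Gamma_m$ (that inequality in Lemma~\ref{lemma3} arises solely from the free-riders' no-deviation constraint, not from any bound on $\beta$), so your expectation there may not pan out, but this is extra care rather than a gap relative to the paper.
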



\begin{lemma} \label{lemma_Gamma} \label{lemma4}
    As \(m\) increases, \(\Gamma_m\) increases.
\end{lemma}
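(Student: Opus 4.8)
The plan is to reduce the statement to the monotonicity of one explicit function of a single real variable. By Proposition~\ref{proposition3}, in the symmetric $m$-NE the verification probability is $\alpha_m = 1-A^{1/m}$, so that $1-\alpha_m = A^{1/m}$, where $A=\frac{B+S}{Z+S}\in(0,1)$ is fixed and independent of $m$. Substituting this into the definition of $\Gamma_m$ removes $\alpha_m$ and leaves an algebraic expression in $m$ and $A$, which extends to all real $m\ge 1$; hence it suffices to show this extension is strictly increasing in $m$, together with the base comparison $\Gamma_0=0<\Gamma_1$. A direct substitution gives the clean value $\Gamma_1=\tfrac12$, so the base comparison is immediate.

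To simplify the derivative I would pass to $x=1/m\in(0,1]$ and write $w=A^{x}=1-\alpha_m$, $L=-\ln A>0$, and $c=\tfrac1A-1=e^{L}-1>0$. Using $\frac{1}{m(m+1)}=\frac{x^2}{1+x}$ and $\frac{1}{m+1}=\frac{x}{1+x}$, the substitution yields
\begin{equation*}
  \Gamma(x)=\frac{x}{1+x}\,\bigl[c\,x-(1-w)\bigr]\,\frac{w}{(1-w)^2},
\end{equation*}
and, since $m$ increasing corresponds to $x$ decreasing, it is enough to prove $\Gamma(x)$ is strictly \emph{decreasing} on $(0,1]$. The factors $\frac{x}{1+x}$ and $\frac{w}{(1-w)^2}$ are visibly positive, and the bracket $H(x)=c\,x-(1-w)$ is also positive: $H(0)=0$, $H''(x)=wL^{2}>0$, and $H'(0)=c-L>0$ because $e^{L}-1>L$; convexity then forces $H>0$ on $(0,\infty)$. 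This both reconfirms $\Gamma>0$ and isolates $H$ as the only non-monotone factor.

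The heart of the proof is the sign of the logarithmic derivative. Differentiating $\ln\Gamma$ gives
\begin{equation*}
  \frac{\Gamma'(x)}{\Gamma(x)}=\frac{1}{x(1+x)}+\frac{c-wL}{c\,x-(1-w)}-L\,\frac{1+w}{1-w},
\end{equation*}
and the goal is to show this is negative for every $x\in(0,1]$. \textbf{This inequality is the main obstacle.} The delicate point is that as $x\to 0^{+}$ the three terms individually blow up like $\tfrac1x$, $\tfrac1x$, and $-\tfrac2x$, so the singular parts cancel and the sign is decided only at the next order; the convexity estimate on $H$ is exactly what controls the middle term in this limit. I would clear the positive denominators $x(1+x)$, $H(x)$, and $1-w$ to reduce the claim to a single smooth inequality $\Phi(x)<0$ on $(0,1]$ with $w=e^{-xL}$ and $c=e^{L}-1$ treated as functions of the fixed parameter $L$, and then establish $\Phi(x)<0$ by expanding $w=e^{-xL}$ (and $c$) as power series and verifying that, after the leading terms cancel, every remaining coefficient carries the sign needed to make $\Phi$ negative for $x>0$. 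Combining this strict monotonicity with $\Gamma_1=\tfrac12>0=\Gamma_0$ shows $\Gamma_m$ is strictly increasing over all $m\ge 0$, as claimed.
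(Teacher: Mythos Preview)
The paper explicitly omits the proof of this lemma (see the final sentence of the appendix: ``Due to space limitations, Lemma~\ref{lemma4}, Lemma~\ref{lemma5}, and Lemma~\ref{lemma6}, which are purely mathematical proofs, are omitted here''), so there is no paper argument to compare against.

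Your reduction is correct: the substitution $1-\alpha_m=A^{1/m}$, the change of variable $x=1/m$, the closed form
\[
\Gamma(x)=\frac{x}{1+x}\,\bigl[cx-(1-w)\bigr]\,\frac{w}{(1-w)^2},\qquad w=e^{-xL},\ c=e^{L}-1,
\]
the evaluation $\Gamma_1=\tfrac12$, the positivity of $H(x)=cx-(1-w)$ via $H(0)=0$, $H'(0)=c-L>0$, $H''>0$, and the expression for $\Gamma'/\Gamma$ all check out.

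The gap is the final step. You assert that after clearing denominators one obtains $\Phi(x)<0$ because, once the singular $1/x$ parts cancel, ``every remaining coefficient carries the sign needed'' --- but you neither compute those coefficients nor give any reason they should have the right sign. They are not numbers: each coefficient in the $x$-expansion is itself a function of the free parameter $L$ (both through $c=e^{L}-1$ and through the powers of $L$ produced by expanding $w=e^{-xL}$), so the ``verification'' you propose is a one-parameter family of inequalities in $L$, not a finite check. Even the first nontrivial one --- the constant term of $\Gamma'/\Gamma$ after the $1/x$ pieces cancel --- works out to $-1+\dfrac{L^{2}}{2(c-L)}$, whose negativity is equivalent to $e^{L}>1+L+\tfrac12 L^{2}$: true, but already a genuine inequality in $L$, and the higher-order terms only get more intricate. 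Since you yourself flag this step as ``the main obstacle,'' the proposal halts exactly where the real work begins; to complete it you would need either to carry out and bound the series term-by-term in $L$, or to find a structural argument that sidesteps the delicate cancellation at $x=0$ altogether.
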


\begin{lemma} \label{lemma5}
    As \(m\) increases, \(\Delta_m\) increases.
\end{lemma}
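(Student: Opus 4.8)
\noindent
The plan is to make $P_m$ and $Q_m$ explicit, recognize $\Delta_m$ as a secant slope, and reduce monotonicity to a one-variable inequality. First I would invoke Proposition~\ref{proposition3} to substitute $\alpha_m = 1-A^{1/m}$, which turns the defining expressions into closed forms in $m$ alone, namely $P_m = \frac{1-A}{m(1-A^{1/m})}$ and $Q_m = A^{(m-1)/m}$. Since the assumption $Z>B$ gives $A=\frac{B+S}{Z+S}\in(0,1)$, the constant $c:=-\ln A>0$ is well defined. I would then relax $m$ to a continuous parameter and set $s=1/m\in(0,1]$, obtaining $P(s)=\frac{(1-A)s}{1-A^{s}}$ and $Q(s)=A^{1-s}$, so that $\Delta_m=\frac{P(s_m)-P(s_{m+1})}{Q(s_m)-Q(s_{m+1})}$ with $s_m=1/m>s_{m+1}=1/(m+1)$.

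\noindent
The central idea is that $\Delta_m$ is the slope of the chord of the parametric curve $\{(Q(s),P(s))\}$ over the interval $[\,s_{m+1},s_m\,]$, so monotonicity of $\{\Delta_m\}$ follows from monotonicity of the instantaneous slope $h(s):=P'(s)/Q'(s)$. Concretely, since $Q'(s)=c\,A^{1-s}>0$, the Cauchy Mean Value Theorem gives $\Delta_m = h(\xi_m)$ for some $\xi_m\in(s_{m+1},s_m)$. Because the intervals $(s_{m+2},s_{m+1})$ and $(s_{m+1},s_m)$ are adjacent, one obtains $\xi_{m+1}<s_{m+1}<\xi_m$; hence if $h$ is decreasing in $s$, then $\Delta_{m+1}=h(\xi_{m+1})>h(\xi_m)=\Delta_m$, which is exactly the claim. (Equivalently, this says that $P$ is a concave function of $Q$.)

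\noindent
It then remains to prove that $h$ is decreasing in $s$. Computing $P'(s)=\frac{(1-A)[(1-A^{s})-sc\,A^{s}]}{(1-A^{s})^{2}}$, dividing by $Q'(s)$, and substituting $t=A^{s}$ (so that $sc=-\ln t$), I expect the expression to collapse to $h=\frac{1-A}{cA}\,H(t)$ with the positive prefactor $\frac{1-A}{cA}$ and $H(t)=\frac{t(1-t+t\ln t)}{(1-t)^{2}}$. Since $t=A^{s}$ is decreasing in $s$, the statement ``$h$ decreasing in $s$'' is equivalent to ``$H$ increasing in $t$ on $(0,1)$''. Differentiating gives $H'(t)=\frac{1-t^{2}+2t\ln t}{(1-t)^{3}}$, so the whole lemma reduces to the scalar inequality $W(t):=1-t^{2}+2t\ln t>0$ on $(0,1)$. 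I would close this with the standard two-layer endpoint argument: $W(1)=0$ and $W'(t)=2(1-t+\ln t)$, while $1-t+\ln t$ itself vanishes at $t=1$ and has derivative $\frac1t-1>0$ on $(0,1)$, hence is negative there; therefore $W'<0$, so $W$ decreases to $W(1)=0$ and stays positive on $(0,1)$.

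\noindent
The main obstacle is the algebraic reduction of $h(s)$ to the clean form $H(t)$: the derivative $P'(s)$ carries $(1-A^{s})^{2}$ in the denominator and the factor $(1-A^{s})-sc\,A^{s}$ in the numerator, and one must use $sc=-\ln(A^{s})$ to rewrite this factor as $1-t+t\ln t$ while correctly tracking the sign reversal coming from $t=A^{s}$ being decreasing in $s$. Once that simplification is carried out, the residual inequality $W(t)>0$ is routine, and the chain ``$W>0 \Rightarrow H$ increasing $\Rightarrow h$ decreasing $\Rightarrow \Delta_m$ increasing'' completes the proof.
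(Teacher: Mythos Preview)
Your argument is correct. The substitution $\alpha_m=1-A^{1/m}$ gives the closed forms $P_m=\frac{(1-A)}{m(1-A^{1/m})}$ and $Q_m=A^{(m-1)/m}$; passing to the continuous parameter $s=1/m$ and invoking the Cauchy Mean Value Theorem is legitimate since $Q'(s)=cA^{1-s}>0$, and the reduction $h(s)=\frac{1-A}{cA}\,H(t)$ with $t=A^{s}$ checks out line by line. The final inequality $W(t)=1-t^{2}+2t\ln t>0$ on $(0,1)$ is handled cleanly by the two-layer endpoint argument, and since the intervals $(s_{m+2},s_{m+1})$ and $(s_{m+1},s_m)$ share only an endpoint, the ordering $\xi_{m+1}<\xi_m$ is strict, so the conclusion $\Delta_{m+1}>\Delta_m$ follows with strict inequality.

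As for comparison with the paper: the paper explicitly omits the proof of this lemma (along with Lemmas~\ref{lemma4} and~\ref{lemma6}) ``due to space limitations,'' calling it a ``purely mathematical'' computation. Your secant-slope/Cauchy-MVT framing is a clean and natural way to carry out what the authors left to the reader; in particular, recognizing $\Delta_m$ as a chord slope of the curve $s\mapsto(Q(s),P(s))$ and reducing monotonicity of chord slopes to concavity of $P$ as a function of $Q$ is exactly the right structural observation, and it avoids any messy direct manipulation of the discrete difference $\Delta_{m+1}-\Delta_m$.
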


\begin{lemma} \label{lemma6}
    For \(\forall m \le n\), \(\Gamma_m < \Delta_m\).
\end{lemma}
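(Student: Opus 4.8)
The plan is to eliminate the equilibrium probabilities via Proposition~\ref{proposition3}, so that $(1-\alpha_m)^m=A$ and $P_m=\frac{1-A}{m\alpha_m}$, $Q_m=\frac{A}{1-\alpha_m}$, and then to read both $\Gamma_m$ and $\Delta_m$ as slopes of chords emanating from the single point $C_0=(Q_m,P_m)$. Introduce the shorthand $\psi(\ell,a)=(1-a)^{\ell-1}$ and $\phi(\ell,a)=\frac{1-(1-a)^{\ell}}{\ell a}$, so that $Q_\ell=\psi(\ell,\alpha_\ell)$ and $P_\ell=\phi(\ell,\alpha_\ell)$. First I would verify, using the telescoping identity $1-A-m\alpha_m A=\alpha_m^2\sum_{k=0}^{m-1}(k+1)(1-\alpha_m)^k$, that $\Gamma_m$ is exactly the chord slope $\frac{\phi(m,\alpha_m)-\phi(m+1,\alpha_m)}{\psi(m,\alpha_m)-\psi(m+1,\alpha_m)}$; that is, $\Gamma_m$ is the slope from $C_0$ to the point $\tau(\alpha_m)$, where $\tau(a):=(\psi(m+1,a),\phi(m+1,a))$. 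By definition $\Delta_m=\frac{P_m-P_{m+1}}{Q_m-Q_{m+1}}$ is the slope from the same $C_0$ to $\tau(\alpha_{m+1})$, since $\psi(m+1,\alpha_{m+1})=Q_{m+1}$ and $\phi(m+1,\alpha_{m+1})=P_{m+1}$.

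Thus both chords leave $C_0$ and land on the one-parameter curve $\tau$, at parameters $\alpha_m$ and $\alpha_{m+1}$ respectively, and Proposition~\ref{proposition3} gives $\alpha_{m+1}<\alpha_m$ (a larger root of $A\in(0,1)$). The inequality $\Gamma_m<\Delta_m$ is therefore equivalent to the three-point orientation statement that $\tau(\alpha_{m+1})$ lies strictly above the line through $C_0$ and $\tau(\alpha_m)$. I would establish this by showing that $\tau$ is strictly concave when viewed as a graph of $\phi$ over $\psi$ and that $C_0$ lies strictly above $\tau$; for a concave arc with an external point above it, the chord to the nearer landing point (here $\tau(\alpha_{m+1})$, since $\alpha_{m+1}<\alpha_m$ pushes $\psi$ closer to $Q_m$) is steeper, which is precisely $\Delta_m>\Gamma_m$.

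The main obstacle is the two facts just invoked, both complicated by the coupling $(1-\alpha_m)^m=(1-\alpha_{m+1})^{m+1}=A$ that ties the two parameters together. Concavity of $\tau$ is a one-variable computation (the sign of $\frac{d^2\phi}{d\psi^2}$ along $\tau$), and the position of $C_0$ amounts to the inequality $P_m>\phi(m+1,\tilde a)$ where $\tilde a$ solves $\psi(m+1,\tilde a)=Q_m$; I expect these to be the delicate steps. It is worth noting that at $m=1$ the curve $\tau$ degenerates to the straight segment $P=\frac{1+Q}{2}$, which passes through $C_0=(1,1)$, so all three points are colinear and $\Gamma_1=\Delta_1$; the strict inequality is genuinely an $m\ge 2$ phenomenon, so I would treat $m=1$ as an explicit boundary case. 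As a fallback that sidesteps the geometry, one can substitute $x=A^{1/m}$ and $y=A^{1/(m+1)}$ (so $x<y$ and $x^m=y^{m+1}$), clear the common positive factor $\frac{1-\alpha_m}{m(m+1)A}$ from both sides, use the telescoping identity above to reduce the $\Gamma$-side to $\sum_{k=0}^{m-1}(k+1)x^k$, and verify the resulting polynomial inequality directly; the monotonicity results of Lemma~\ref{lemma4} and Lemma~\ref{lemma5} can be reused to control the $y$-dependence.
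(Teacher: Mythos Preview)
The paper explicitly omits the proof of Lemma~\ref{lemma6} (``Due to space limitations, Lemma~\ref{lemma4}, Lemma~\ref{lemma5}, and Lemma~\ref{lemma6}, which are purely mathematical proofs, are omitted here''), so there is nothing to compare your argument against. I will therefore simply assess the proposal on its own.

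Your geometric set-up is correct and elegant: with $u_m=1-\alpha_m$ one indeed has $C_0=(Q_m,P_m)=(\psi(m,\alpha_m),\phi(m,\alpha_m))$, and your telescoping identity $1-A-m\alpha_m A=\alpha_m^2\sum_{j=1}^{m}j\,u_m^{j-1}$ shows that $\Gamma_m$ equals the chord slope from $C_0$ to $\tau(\alpha_m)$, while $\Delta_m$ is by definition the chord slope from $C_0$ to $\tau(\alpha_{m+1})$. You are also right that $\Phi(v):=\phi$ as a function of $v:=\psi$ on the curve $\tau$ is $\Phi(v)=\frac{1}{m+1}\sum_{k=0}^{m}v^{k/m}$, which is strictly concave for $m\ge 2$ and linear for $m=1$; and your boundary observation is important and correct: at $m=1$ one has $\Gamma_1=\Delta_1=\tfrac12$, so the lemma as stated with a strict inequality fails there and should read $\Gamma_m\le\Delta_m$, with equality exactly at $m=1$.

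The gap is in the inference you draw from concavity. The claim ``for a concave arc with an external point above it, the chord to the nearer landing point is steeper'' is false in general: take $\Phi(v)=\sqrt{v}$, $C_0=(1,1.5)$, $T_1=(0.01,0.1)$, $T_2=(0.25,0.5)$; then $C_0$ is above the curve, $T_2$ is nearer, yet the chord to $T_2$ is \emph{less} steep. So ``$C_0$ above $\tau$'' is not the right hypothesis.

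The fix is hiding in your own computation. From the telescoping identity,
\[
\Gamma_m=\frac{\alpha_m^{2}\sum_{j=1}^{m}j\,u_m^{j-1}\cdot u_m}{m(m+1)\,u_m^{m}\,\alpha_m^{2}}
=\frac{\sum_{j=1}^{m}j\,u_m^{j-1}}{m(m+1)\,u_m^{m-1}},
\]
and a direct differentiation of $\tau(a)=((1-a)^m,\frac{1}{m+1}\sum_{j=0}^m(1-a)^j)$ gives the tangent slope of $\tau$ at $a=\alpha_m$ as exactly the same quantity. Hence $\Gamma_m$ is not merely a chord slope but the \emph{tangent} slope of $\tau$ at $\tau(\alpha_m)$; equivalently, $C_0$ lies on the tangent line to $\tau$ at $\tau(\alpha_m)$. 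Now the concavity argument goes through cleanly: for $m\ge 2$ strict concavity puts $\tau(\alpha_{m+1})$ strictly below that tangent line, and since $Q_m>Q_{m+1}$ the chord from $C_0$ to $\tau(\alpha_{m+1})$ has slope strictly exceeding the tangent slope, i.e.\ $\Delta_m>\Gamma_m$. For $m=1$ the curve is the line $\Phi(v)=\tfrac{1+v}{2}$, whence $\Delta_1=\Gamma_1$.

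With this one observation your main route becomes a complete proof, and the algebraic fallback (which, as written, still leaves a nontrivial two-variable polynomial inequality to be checked) is unnecessary.
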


Based on the above lemmas, we can now proceed the main theoerm of the symmetric case as follows.
\begin{theorem}[Main Theorem] \label{main_theorem}
    $n$-NE always exists. When $\Gamma_{m-1} < R \le \Gamma_{m} \ (0<m<n)$, additional equilibriums are the following $n-m-1$ equilibrium: $m$-NE, $(m+1)$-NE, ..., $(n-1)$-NE. The probabilities in the equilibrium $\beta$ and $\alpha_m$, are as stated in Proposition~\ref{proposition3}.
    
    Among these $n-m$ equilibriums, the system loss increases as the number of verifiers increases, that is $\beta_m < \beta_{m+1} < \dots < \beta_n$, such that $\mathcal{L}_m < \mathcal{L}_{m+1} < \dots < \mathcal{L}_n$.
\end{theorem}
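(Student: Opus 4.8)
The plan is to split the statement into its three assertions---unconditional existence of the $n$-NE, the exact list of surviving equilibria inside the window $\Gamma_{m-1}<R\le\Gamma_m$, and the ordering $\mathcal{L}_m<\dots<\mathcal{L}_n$---and to reduce each to the machinery already built in Lemma~\ref{lemma3}, Proposition~\ref{proposition3}, Lemma~\ref{lemma_Gamma} and Lemma~\ref{lemma6}. Recall from Lemma~\ref{lemma3} that a $k$-NE is pinned down by three conditions: the $\beta$-equation, the constraint $R\le\Gamma_k$ that is imposed \emph{only} when $k<n$, and $A=(1-\alpha_k)^k$; Proposition~\ref{proposition3} supplies the closed forms $\alpha_k=1-A^{1/k}$ and $\beta_k=C/D_k$ with $D_k=P_k\delta S-Q_k(f_pV+\tfrac{T}{n})+f_pV$. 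I will reserve $m$ for the threshold index in the hypothesis and use $k$ for a generic number of mixed-strategy validators.

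First I would settle existence. The condition $A=(1-\alpha_k)^k$ gives $\alpha_k=1-A^{1/k}\in(0,1)$ for every $k$ since $A\in(0,1)$, and the $\beta$-equation produces $\beta_k$, whose validity as a probability is inherited from Proposition~\ref{proposition3}. The only genuinely restrictive condition is $R\le\Gamma_k$, and by Lemma~\ref{lemma3} it is absent for $k=n$; hence the $n$-NE always exists. For $k<n$ a $k$-NE exists exactly when $R\le\Gamma_k$. Since Lemma~\ref{lemma_Gamma} makes $\Gamma_k$ increasing in $k$, the window $\Gamma_{m-1}<R\le\Gamma_m$ forces $R\le\Gamma_k$ for all $k\ge m$ and $R>\Gamma_k$ for all $k<m$; thus the surviving equilibria are precisely the $k$-NE with $m\le k\le n$, i.e.\ the $n$-NE together with the additional $m$-NE, $(m+1)$-NE, $\dots$, $(n-1)$-NE.

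Next I would handle the losses. The key simplification is that in any $k$-NE the $n-k$ free-riders never verify, so $\prod_{i=1}^n(1-\alpha_i)=(1-\alpha_k)^k=A$ by the third condition of Lemma~\ref{lemma3}, a quantity independent of $k$. Consequently $\mathcal{L}_k=\beta_k\prod_i(1-\alpha_i)Z=\beta_k\,A\,Z=\beta_k\frac{(S+B)Z}{S+Z}$, and the entire loss ordering collapses to proving $\beta_m<\beta_{m+1}<\dots<\beta_n$. Since $\beta_k=C/D_k$ with $D_k>0$, we have $\beta_k<\beta_{k+1}\iff D_k>D_{k+1}$; substituting $f_pV+\tfrac{T}{n}=R\,\delta S$ yields $D_k-D_{k+1}=\delta S\big[(P_k-P_{k+1})-R(Q_k-Q_{k+1})\big]$, and dividing through by $Q_k-Q_{k+1}$ turns the target inequality into $\Delta_k>R$.

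Finally I would close the chain: for every $k$ with $m\le k\le n-1$, Lemma~\ref{lemma_Gamma} gives $\Gamma_k\ge\Gamma_m\ge R$ while Lemma~\ref{lemma6} gives $\Gamma_k<\Delta_k$, so $\Delta_k>\Gamma_k\ge R$, which is exactly $\Delta_k>R$; hence $\beta_k<\beta_{k+1}$ across the whole window and $\mathcal{L}_m<\dots<\mathcal{L}_n$ follows. The hard part will be the sign bookkeeping in the reduction to $\Delta_k>R$: one must first determine the sign of $Q_k-Q_{k+1}$ (at the equilibrium values $Q_k=A^{1-1/k}$ is monotone in $k$, which fixes whether dividing preserves or flips the inequality) and of $P_k-P_{k+1}$, and then guarantee $\Delta_k>R$ \emph{simultaneously} for all $k$ in the window rather than at a single index---precisely the uniformity that the monotonicity of $\Gamma$ (Lemma~\ref{lemma_Gamma}) and the gap $\Gamma_k<\Delta_k$ (Lemma~\ref{lemma6}) supply once combined with the defining bound $R\le\Gamma_m$.
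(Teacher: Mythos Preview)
Your proposal is correct and follows essentially the same architecture as the paper: use Lemma~\ref{lemma3} plus the monotonicity of $\Gamma$ (Lemma~\ref{lemma_Gamma}) to obtain the exact list of surviving $k$-NE, reduce the loss comparison to $\beta_k<\beta_{k+1}\iff R<\Delta_k$, and close with Lemma~\ref{lemma6}. The only noteworthy difference is in how the final chain $R<\Delta_k$ for all $k\ge m$ is obtained: the paper argues $R\le\Gamma_m<\Delta_m\le\Delta_k$ by invoking Lemma~\ref{lemma5} (monotonicity of $\Delta$), whereas you argue $R\le\Gamma_m\le\Gamma_k<\Delta_k$ by reusing Lemma~\ref{lemma_Gamma} at each $k$ and then applying Lemma~\ref{lemma6} pointwise---so your route actually bypasses Lemma~\ref{lemma5} entirely, which is a mild economy.
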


\begin{proof}[Sketch]
    We draw the first conclusion from Proposition~\ref{proposition3} and Lemma~\ref{lemma4}. 
    According to the definition of $\beta$ in Proposition~\ref{proposition3}, $\beta_m < \beta_{m+1}$ can be simplified to $R<\Delta_m$. Then we can derive the second conclusion of Theorem~\ref{main_theorem} is from Lemma~\ref{lemma5} and Lemma~\ref{lemma6}. 
\end{proof}

In summary, by leveraging the lemmas and definitions established earlier, we have shown that the conditions for the equilibrium in the symmetric case can be determined by the relationship between \(\Gamma_m\) and \(R\). And the optimal equilibrium when $\Gamma_{m-1} < R \le \Gamma_{m}$ is when there are $m$ mixed strategy validators. See the Appendix for the specific proof.

The maximum system loss for the symmetric case is when there are \(n\) mixed strategy validators, which is given by:
\begin{equation*}
    \mathcal{L}_{sym_n} = \beta_n \prod_{i=1}^n (1-\alpha_i) Z = \frac{C}{P_n \delta S - Q_n (f_pV + \frac{T}{n}) + f_pV} \cdot \frac{(S+B)Z}{S+Z}.
\end{equation*}

\subsubsection{The Asymmetric Case}
Next, we consider the asymmetric case. 

Suppose there are \(k\) validators who play the mixed strategy with probability \(\alpha_1\), and \(m-k\) validators who play the mixed strategy with probability \(\alpha_2\). This scenario introduces variability in validator strategies, making the analysis more complex than the symmetric case. To facilitate the following analysis, we assume \(\alpha_1 < \alpha_2\).

Let $\displaystyle p_3 = \sum_{i=0}^{k-1}C_{k-1}^i \alpha_1 ^i (1-\alpha_1)^{k-1-i}  \sum_{j=0}^{m-k-1}C_{m-k-1}^j \alpha_2 ^j (1-\alpha_2)^{m-k-1-j} \frac{1}{i+j+2}$, 
$\displaystyle p_4 = \sum_{i=0}^{k-1}C_{k-1}^i \alpha_1 ^i (1-\alpha_1)^{k-1-i}  \sum_{j=0}^{m-k-1}C_{m-k-1}^j \alpha_2 ^j (1-\alpha_2)^{m-k-1-j} \frac{1}{i+j+1}$, \(p_5 = (1-\alpha_1)^{k-1} (1-\alpha_2)^{m-k-1}\), we state the lemma as follows.

\begin{lemma} \label{lemma7}
    The necessary conditions for the equilibrium of asymmetric case are:
    \begin{subequations}
    \begin{align}
        & (p_3 - p_4) \delta S + p_5 (f_p V + \frac{T}{n}) = 0 \label{eqp1} \\
        & \frac{C}{\beta} - p_4 \delta S - f_p V + p_5 (f_p V + \frac{T}{n}) = 0 \label{eqp2} \\
        & (1-\alpha_1)^k (1-\alpha_2)^{m-k} Z - \left[1 - (1-\alpha_1)^k (1-\alpha_2)^{m-k}\right] S = B \label{f23}
    \end{align}
    \end{subequations}
\end{lemma}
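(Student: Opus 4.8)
The plan is to obtain the three displayed equations as the indifference conditions that characterize a mixed-strategy equilibrium: one condition for the aggregator and one for each of the two verifier groups. Throughout I would use the payoffs of Table~\ref{table:matrix1n}, keeping in mind that the $n-m$ free-riders never verify and that the challenger reward $\delta S$ is split equally among the realized verifiers.

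First I would write the aggregator's indifference condition. A non-attacking aggregator always collects $B$, while an attacking aggregator collects $Z$ exactly when no validator verifies and loses $S$ otherwise. Since only the $k$ group-one and $m-k$ group-two validators can verify, the attack goes undetected with probability $P_0=(1-\alpha_1)^k(1-\alpha_2)^{m-k}$. Equating $B$ with $P_0 Z-(1-P_0)S$ gives Eq.(\ref{f23}) immediately. Next, for a \emph{tagged} validator in each group I would equate the payoff of verifying and of free-riding. Verifying always costs $C$, detects every attack, and yields $\delta S/J$ where $J$ is one plus the number of other verifiers, so its expected payoff is $\beta\,\delta S\,E[1/J]+(1-\beta)\frac{T}{n}-C$. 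Free-riding yields $\frac{T}{n}$ except when the aggregator attacks and some \emph{other} validator detects it, which costs $f_pV$; the ``no other detector'' event has probability $(1-\alpha_1)^{k-1}(1-\alpha_2)^{m-k}$ for a group-one validator and $(1-\alpha_1)^{k}(1-\alpha_2)^{m-k-1}$ for a group-two validator. Cancelling the common $(1-\beta)\frac{T}{n}$ leaves two indifference equations.

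The crucial step is to evaluate $E[1/J]$ for each group and recognize the quantities $p_3,p_4,p_5$. I would condition on whether one distinguished validator of the \emph{opposite} group verifies. Splitting off one $\alpha_2$-validator from a group-one validator's environment gives $E_1=\alpha_2 p_3+(1-\alpha_2)p_4$, and splitting off one $\alpha_1$-validator for a group-two validator gives $E_2=\alpha_1 p_3+(1-\alpha_1)p_4$, where $p_4=E[1/(1+Y)]$ and $p_3=E[1/(2+Y)]$ for the common pool $Y\sim \mathrm{Bin}(k-1,\alpha_1)+\mathrm{Bin}(m-k-1,\alpha_2)$. Writing the two ``no other detector'' probabilities as $(1-\alpha_2)p_5$ and $(1-\alpha_1)p_5$, I would subtract the two validator equations: since $E_1-E_2=(\alpha_2-\alpha_1)(p_3-p_4)$ and the probability difference is $(\alpha_2-\alpha_1)p_5$, every surviving term carries the factor $(\alpha_2-\alpha_1)$, which is nonzero by the assumption $\alpha_1<\alpha_2$ and can be divided out, yielding Eq.(\ref{eqp1}). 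Finally, substituting Eq.(\ref{eqp1}) back into the group-two indifference equation collapses the $\alpha_1$-weighted terms and produces Eq.(\ref{eqp2}).

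The main obstacle I anticipate is the combinatorial bookkeeping in evaluating $E[1/J]$: getting the binomial convolution of the two heterogeneous verifier pools right, and seeing that the ``peel off one opposite-group validator'' identity is precisely what converts the group-dependent expectations $E_1,E_2$ into the group-independent quantities $p_3$ and $p_4$. Once that identity is established, the factor $(\alpha_2-\alpha_1)$ emerges cleanly and the remaining manipulations are routine substitutions.
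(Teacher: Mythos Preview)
Your proposal is correct and follows essentially the same route as the paper: write the aggregator's indifference to get Eq.~(\ref{f23}); write each group's verifier/free-rider indifference using the payoffs of Table~\ref{table:matrix1n}; express the expected challenger share via the ``peel off one opposite-group validator'' identity to obtain $E_1=\alpha_2 p_3+(1-\alpha_2)p_4$ and $E_2=\alpha_1 p_3+(1-\alpha_1)p_4$; subtract and divide by $(\alpha_2-\alpha_1)\neq 0$ to get Eq.~(\ref{eqp1}); then back-substitute to obtain Eq.~(\ref{eqp2}). The paper carries out exactly these steps, so there is nothing substantive to add.
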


Combining Eq.~(\ref{eqp1}), Eq.~(\ref{eqp2}), and Eq.~(\ref{f23}), we can determine the trend of \(\alpha_1\), \(\alpha_2\), and \(\beta\) as \(R\) changes. We set \(n=15\), \(m=10\), and other necessary constants. We calculated the probabilities \(\alpha_1\) and \(\alpha_2\) as shown in Fig.~\ref{fig:m11}. 

\begin{figure}
\includegraphics[width=\textwidth]{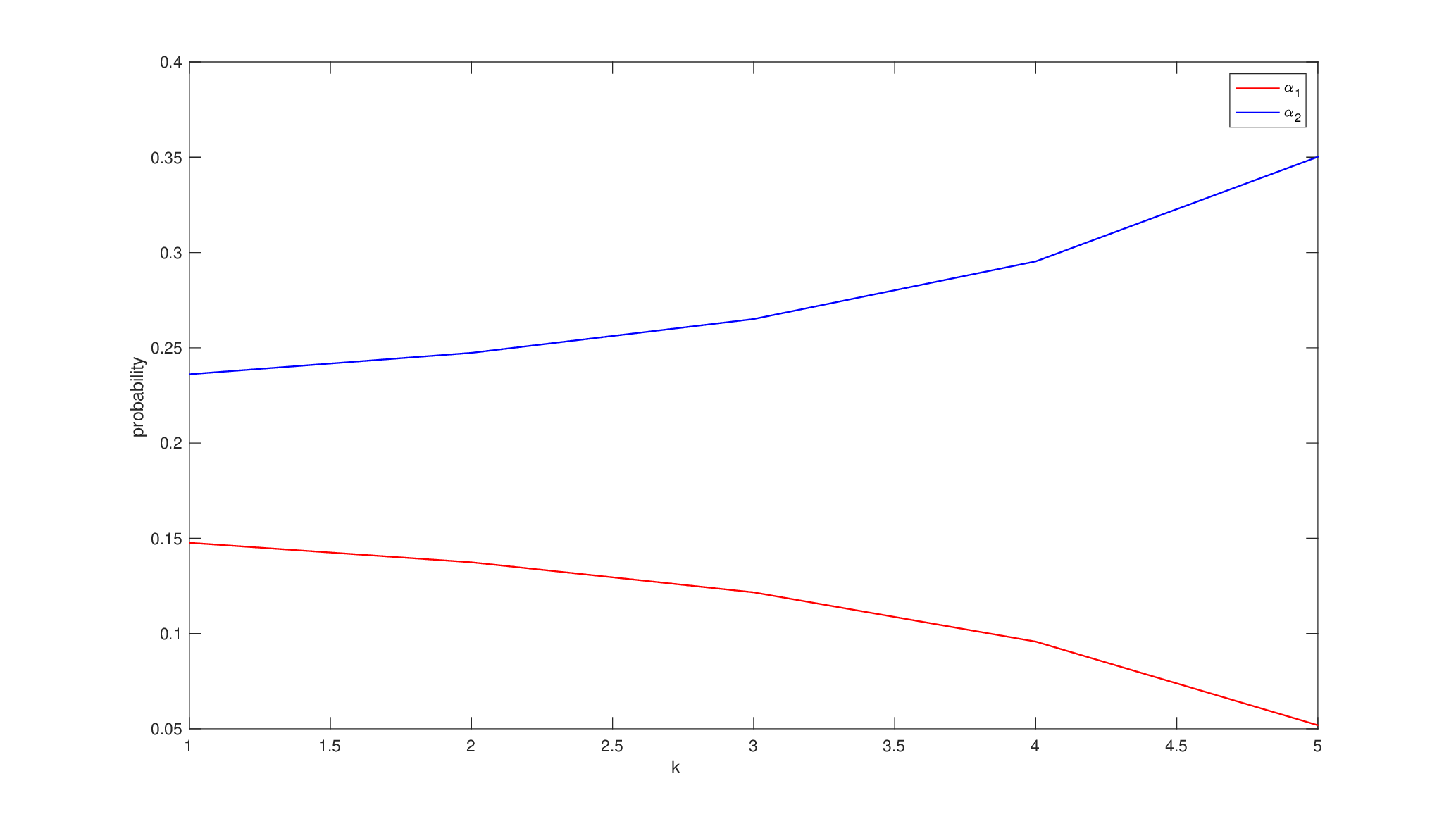}
 \caption{The relationship between the probabilities \textcolor{red}{$\alpha_1$} and \textcolor{blue}{$\alpha_2$} as the number of verifier \(k\) changes.}
 \label{fig:m11}
\end{figure}

\begin{observation}
    In equilibrium of the asymmetric case, the probability of verification will be greater for the group with more mixed strategy validators, i.e. $k \leq m/2$.
\end{observation}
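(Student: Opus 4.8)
The plan is to collapse the three equilibrium conditions of Lemma~\ref{lemma7} into a clean pair of relations in the "odds" variables $u=\alpha_1/(1-\alpha_1)$ and $v=\alpha_2/(1-\alpha_2)$ (so that $\alpha_1<\alpha_2$ is equivalent to $u<v$), exploit a hidden symmetry of that pair to locate the symmetric point $k=m/2$, and then pin down the sign of $\alpha_2-\alpha_1$ on either side of it. The reduction starts from the two validator indifference conditions (verifier versus free-rider) for a type-$1$ and a type-$2$ mixed validator. Let $W$ denote the number of \emph{other} mixed validators who verify, a sum of independent binomials with $k-1$ trials at rate $\alpha_1$ and $m-k-1$ trials at rate $\alpha_2$. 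Using $\frac{1}{(w+1)(w+2)}=\int_0^1 t^{w}(1-t)\,dt$ together with the binomial probability generating function, one checks that $p_4-p_3=\mathrm{E}\!\left[\frac{1}{(W+1)(W+2)}\right]$ and that Eq.~(\ref{eqp1}) is exactly the statement that the two indifference conditions coincide. Dividing Eq.~(\ref{eqp1}) by $p_5$ and rewriting Eq.~(\ref{f23}) gives
\[
\int_0^1 (1-t)\,(1+ut)^{\,k-1}(1+vt)^{\,m-k-1}\,dt = R, \qquad (1+u)^{k}(1+v)^{\,m-k}=\tfrac{1}{A}.
\]

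The crucial observation is that both displayed relations, as well as Eq.~(\ref{eqp2}) determining $\beta$, are invariant under the swap $(k,u)\leftrightarrow(m-k,v)$, since the integrand and the product are symmetric in their two factors and $p_4,p_5,\beta$ are preserved. Assuming the asymmetric equilibrium is unique for each fixed $k$, this invariance implies that the equilibrium for a size-$(m-k)$ first group is the swap of the one for a size-$k$ first group. Consequently $h(k):=\alpha_2(k)-\alpha_1(k)$ is antisymmetric about $k=m/2$, namely $h(m-k)=-h(k)$, so $h(m/2)=0$ and at $k=m/2$ the configuration degenerates to the symmetric case $\alpha_1=\alpha_2$. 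Because a genuinely asymmetric equilibrium has $\alpha_1\neq\alpha_2$, the function $h$ is nonzero for $k\neq m/2$, hence by continuity it keeps a constant sign on each of $(0,m/2)$ and $(m/2,m)$, with opposite signs on the two sides.

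It then remains only to decide which side carries $h>0$. I would settle this either by implicit differentiation of the two relations at the symmetric point — computing $\operatorname{sign}\frac{du}{dk}$ and using $v'=-u'$ there to read off $\operatorname{sign}(v-u)$ for $k$ just below $m/2$ — or by evaluating $h$ at one extreme instance (say $k=1$) and propagating the sign by the constant-sign property. Either route yields $h(k)>0$ exactly for $k<m/2$; since the statement fixes the labeling $\alpha_1<\alpha_2$ (that is, $h>0$), this gives $k\le m/2$, which is the claim.

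The main obstacle is supplying the two analytic inputs that make the symmetry argument rigorous: first, uniqueness of the asymmetric solution (needed to upgrade the swap invariance into antisymmetry of $h$), and second, the sign/monotonicity computation at the symmetric point. The difficulty in both is that the integral in the first relation couples $u$, $v$ and $k$ in a non-factoring way, so the implicit derivatives must be controlled directly; I expect the cleanest handle to be the antisymmetry, which reduces the problem to a one-sided sign determination rather than a global monotonicity statement.
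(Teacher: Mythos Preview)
The paper does not give an analytic proof of this Observation; it is presented as an empirical finding from numerically solving the system of Lemma~\ref{lemma7} for specific parameters ($n=15$, $m=10$, Fig.~\ref{fig:m11}). Your proposal is therefore a genuinely different and more ambitious route.

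Your reduction of Eq.~(\ref{eqp1}) and Eq.~(\ref{f23}) to the pair
\[
\int_0^1 (1-t)(1+ut)^{k-1}(1+vt)^{m-k-1}\,dt=R,\qquad (1+u)^k(1+v)^{m-k}=\tfrac{1}{A}
\]
in the odds $u,v$ is correct and tidy. The swap invariance $(k,u)\leftrightarrow(m-k,v)$ is also correct, but note that it is exactly the relabeling symmetry of the two validator groups: the configuration ``$k$ validators at $\alpha_1$, $m-k$ at $\alpha_2$'' and ``$m-k$ validators at $\alpha_2$, $k$ at $\alpha_1$'' describe the \emph{same} physical equilibrium. The resulting antisymmetry $h(m-k)=-h(k)$ is then the tautology that swapping group labels flips the sign of $\alpha_2-\alpha_1$; it does not by itself constrain which group carries the larger probability. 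The entire substantive content of the Observation therefore sits in the sign-determination step you defer, so the symmetry buys less than the proposal suggests.

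There is also a concrete obstruction to your uniqueness hypothesis: the paper's own Figs.~\ref{fig:beta20010v1} and~\ref{fig:beta20011v} show that a second equilibrium branch appears as $k$ approaches $m/2$, so for those $k$ the two-equation system has at least two solutions and $h(k)$ is not single-valued. Any continuous-branch argument would have to track the branches through this bifurcation, which is more delicate than the plan indicates. In short, the paper offers only numerical evidence; your reformulation is valid, but the swap symmetry is a relabeling tautology, uniqueness fails near $k=m/2$, and the unfinished sign step is where all the work lies.
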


From Eq.~(\ref{eqp2}), we can derive that:
\begin{equation*}
\begin{aligned}
    \beta = \frac{C}{p_4 \delta S  - p_5 (f_p V + \frac{T}{n}) + f_p V}.
\end{aligned}
\end{equation*}

Fig.~\ref{fig:beta20010v1} shows the variation of \(\beta\) values corresponding to different \(k\) values as \(R\) changes when \(m=10\). As \(k\) increases, the value of \(\beta\) decreases, resulting in a reduction in system losses. As \(k\) gradually approaches \(\frac{m}{2}\), an additional \(\beta\) value appears, and this \(\beta\) will gradually converge with the other \(\beta\). Similarly, we set \(m\) to 11 while keeping other variables unchanged, resulting in Fig.~\ref{fig:beta20011v}.

The system loss for the asymmetric case is given by:
\begin{equation}
    \mathcal{L}_{asym} = \beta_k \prod_{i=1}^n (1-\alpha_i) Z = \frac{C}{p_4 \delta S  - p_5 (f_p V + \frac{T}{n}) + f_p V} \cdot \frac{(S+B) Z}{S+Z}.
\end{equation}

From Figs.~\ref{fig:beta20010v1} and \ref{fig:beta20011v}, we observe that \(k=0\) represents the symmetric case where all mixed strategy validators play with the same probability, and \(\beta\) is maximized when \(k=0\). Since \(\beta_{sym} > \beta_{asym}\), we can infer that the system loss in the symmetric case \(\mathcal{L}_{sym}\) is greater than the system loss in the asymmetric case \(\mathcal{L}_{asym}\).

\begin{figure}
\includegraphics[width=\textwidth]{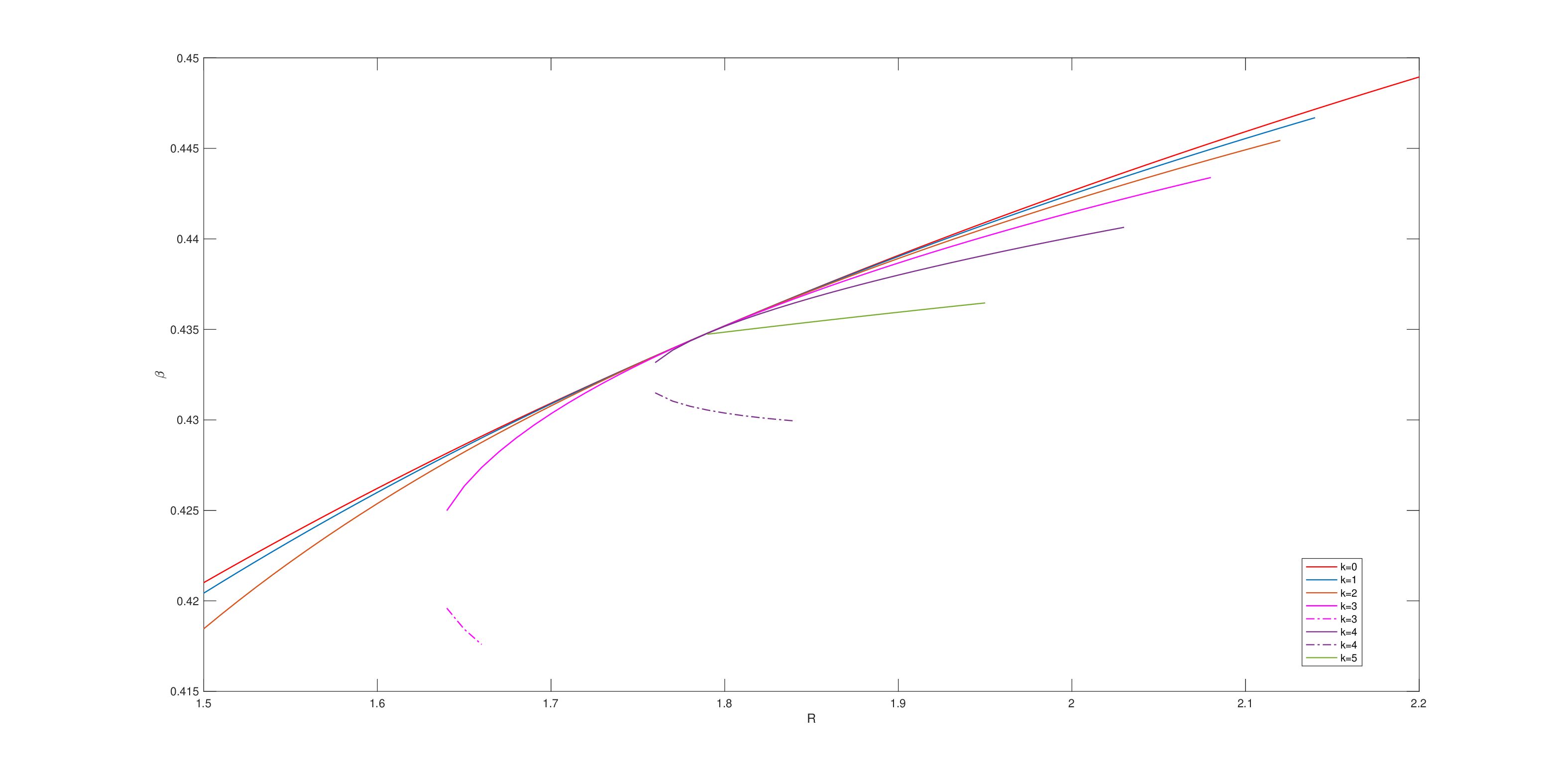}
\caption{The image of the $\beta$ as validators k and R changes. The solid line represents the general $\beta$ that exists for all k, while the dashed line represents the additional $\beta$ that appears as k changes.}\label{fig:beta20010v1} 
\end{figure}

\begin{figure}
\includegraphics[width=\textwidth]{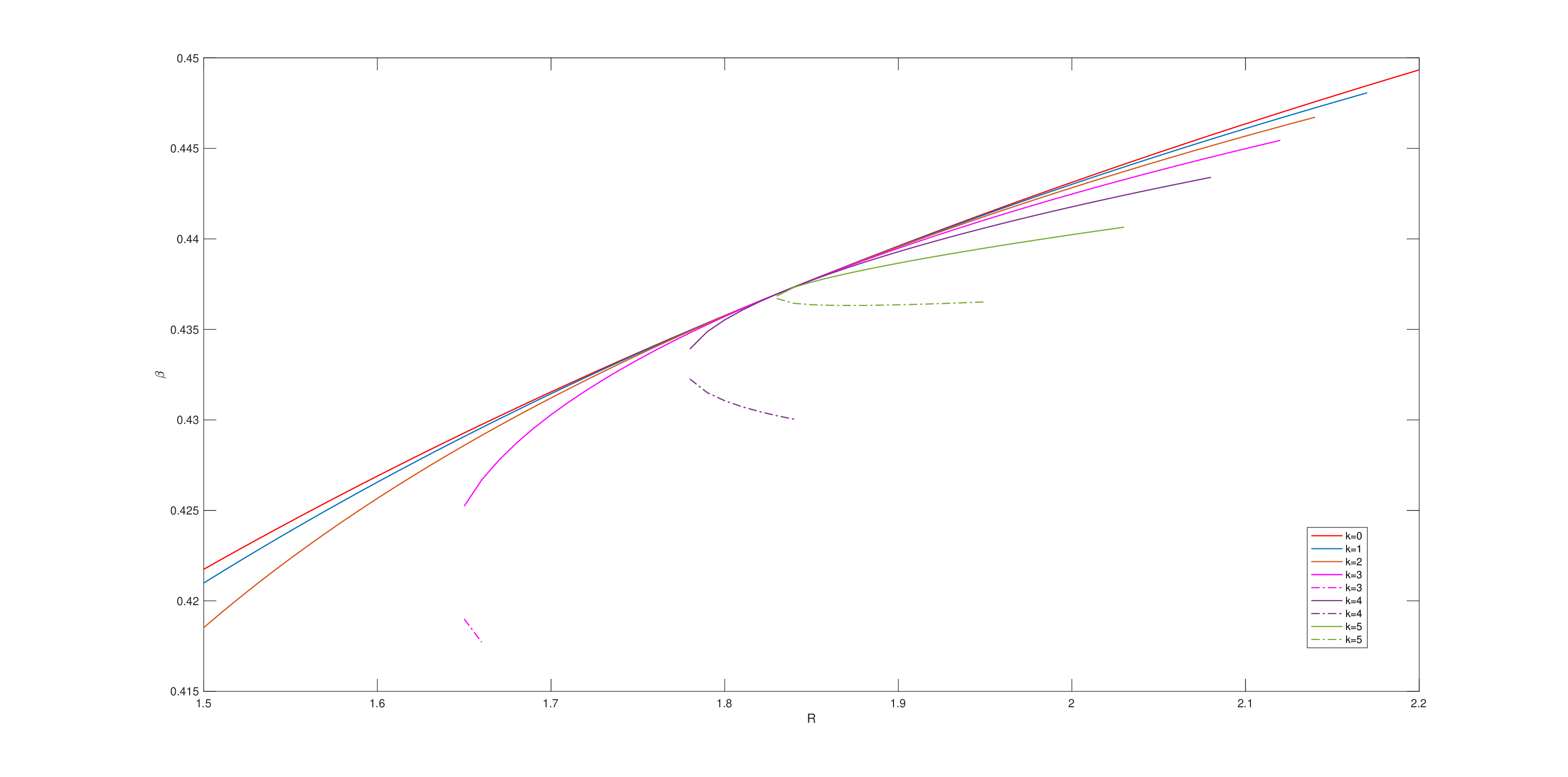}
\caption{The image of the $\beta$ as validators k and R changes. The solid line represents the general $\beta$ that exists for all k, while the dashed line represents the additional $\beta$ that appears as k changes.}\label{fig:beta20011v}
\end{figure}

\section{Non-KYC Scenario}
In the previous section, we discussed the game involving one aggregator with one validator, two validators, and multiple validators under the KYC model. In the KYC model, the deposit \( V \) required by the verifier to participate in the game needs to be a fixed value. Now, we extend the above discussion to non-KYC models, where the validator can deposit an arbitrary amount of stake. A common scenario under the non-KYC model is blockchain. In blockchain, validators can choose the deposit amount arbitrarily. 

When there is only one aggregator and one validator, it is the same as the case in the KYC scenario. Suppose there is one aggregator and two validators. Based on previous analysis, there is no pure strategy equilibrium. Consider the case where the aggregator plays the mixed strategy, attacking with probability \( \beta \), and the validators \( P_1 \) and \( P_2 \) play the strategy as verifier or free-rider with their deposits \( V_1 \) and \( V_2 \). Assuming both of their deposits are in the range of \( [V_{min}, V_{max}] \).

\begin{proposition}
    When a validator verifies, he always deposits the max amount $V_{max}$.
\end{proposition}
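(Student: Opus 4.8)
The plan is to fix the attack probability $\beta$ of the aggregator and the strategies and deposits of all other validators, and to treat the expected utility of a verifying validator $P_i$ purely as a function of its own deposit $V_i \in [V_{min}, V_{max}]$. I would then show this function is strictly increasing, so that its maximum over the admissible interval is attained at the right endpoint $V_i = V_{max}$.

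First I would write down the expected utility of $P_i$ conditional on the decision to verify. The central observation is that a verifier always acts correctly: having paid the deposit-independent cost $C$ to learn the true state of the block, it pledges every valid block and challenges every invalid one. Consequently $P_i$ never triggers either penalty term $f_p V_i$ or $f_n V_i$, and never forfeits its stake, since the deposit is lost only upon an incorrect pledge or a failed challenge, neither of which a verifier ever makes. Thus the only way $V_i$ enters the conditional utility is through the two reward terms, and larger deposit carries no offsetting risk.

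Next I would show each reward component is non-decreasing in $V_i$. When the block is valid (probability $1-\beta$), $P_i$ collects the validator reward $T \cdot V_i / \sum_j V_j$, whose derivative in $V_i$ with the other deposits held fixed is $T \sum_{j \ne i} V_j / (\sum_j V_j)^2 > 0$, hence strictly increasing. When the block is invalid (probability $\beta$), $P_i$ is among the successful challengers and, under the deposit-proportional sharing rule justified by the Sybil argument in the model, receives a share of $\delta S$ proportional to $V_i$ relative to the total challenger deposit, which is again non-decreasing in $V_i$. Since $C$ is a singular cost independent of the deposit and $\beta \in (0,1)$ by Proposition~\ref{propos1}, the valid-block reward term carries strictly positive weight, so the total expected utility of verifying is strictly increasing in $V_i$. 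I would then conclude that its unique maximizer on $[V_{min}, V_{max}]$ is $V_i = V_{max}$, establishing that any validator choosing to verify stakes the maximum amount.

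The main obstacle is formalizing the challenge-reward share. I must argue carefully that increasing one's own deposit, i.e. spawning proportionally more Sybil accounts, cannot decrease one's expected share of $\delta S$ even as the identities and deposits of the other challengers vary. This requires invoking the deposit-proportional convention consistently, in place of the ``equal split among challengers'' rule used in the KYC analysis, and checking that the monotonicity survives when summed against the distribution over how many other validators simultaneously challenge.
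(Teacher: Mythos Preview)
The paper states this proposition without proof, treating it as immediate from the Sybil-attack footnote in the model description: in the non-KYC setting every reward and penalty term scales with the chosen deposit while the verification cost $C$ stays singular. Your monotonicity argument supplies the omitted reasoning and is correct. A verifier, having learned the true block state, never triggers either penalty, so $V_i$ enters the conditional utility only through the proportional shares of $T$ and of $\delta S$; both are increasing in $V_i$ for any fixed profile of the other deposits, and the flat cost $C$ does not interfere. The obstacle you flag about the challenge-reward rule is mild: under the Sybil interpretation $\delta S$ is necessarily split in proportion to deposit among the challengers, and $V_i/(V_i+W)$ is increasing for every realization $W\ge 0$ of the competing challenger stake, so monotonicity survives after taking expectation over the other validators' randomization. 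Your use of $\beta\in(0,1)$ to put positive weight on the strictly increasing $T$-share term is exactly what upgrades ``non-decreasing'' to ``strictly increasing'' and pins down $V_{max}$ as the unique maximizer.
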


We consider the case when both validators play as the mixed strategy. Assume that $P_1$ plays the mixed strategy of verifier with a probability $\alpha_1$ and free-rider with a probability $1-\alpha_1$, while $P_2$ plays the mixed strategy of verifier with a probability $\alpha_2$ and free-rider with a probability $1-\alpha_2$. Denote deposit of $(P_1/P_2)$ is $(V_1^*/V_2^*)$. When $P_1/P_2$ plays as the verifier, they will deposit $V_{max}$. 

\begin{theorem} \label{theorem4}
    In equilibrium with two symmetric validators in non-KYC scenario, the validators either verify with a probability $\displaystyle \alpha_1 = \alpha_2 =1-\sqrt{\frac{B+S}{Z+S}}$ staking $V_{max}$ or not verify staking $V^*$. The system loss of non-KYC model $\mathcal{L}_{nonKYC}$ is less than $\mathcal{L}_{KYC}$ when the fixed deposit $V$ in the KYC scenario is selected as $V^*$.
\end{theorem}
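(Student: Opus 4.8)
The plan is to characterize the symmetric mixed-strategy equilibrium through three conditions that pin down $(\alpha,\beta,V^*)$, and then to reduce the loss comparison to a one-line inequality on $\beta$. Throughout I use the fact (already established) that there is no pure-strategy equilibrium and that a verifier always stakes $V_{max}$, so only a free-rider's stake $V^*$ is a free choice. The key observation for the second half is that the system loss is $\mathcal{L}=\beta\prod_i(1-\alpha_i)Z=\beta(1-\alpha)^2Z=\beta\,(B+S)Z/(Z+S)$ in \emph{both} models, and that $\alpha$ coincides across the two models, so the comparison comes down entirely to comparing the attack probabilities $\beta$.

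First I would write the aggregator's indifference condition. A honest block yields $B$, while attacking yields $Z$ only when both validators free-ride (probability $(1-\alpha_1)(1-\alpha_2)$) and $-S$ otherwise; equating and imposing symmetry gives $(1-\alpha)^2(Z+S)=B+S$, i.e. $\alpha=1-\sqrt{(B+S)/(Z+S)}$, independent of the deposits. Next I would write each validator's expected payoff for verify versus free-ride, using the proportional split of $T$ and the equal split of the challenger reward $\delta S$. The only new ingredient relative to the KYC case is that, on a correct block, a verifier (staking $V_{max}$) facing a free-riding opponent (staking $V^*$) collects $TV_{max}/(V_{max}+V^*)$ instead of $T/2$, and symmetrically the free-rider collects $TV^*/(V^*+V_{max})$. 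The free-rider's optimal stake $V^*$ is fixed by the first-order condition of its payoff in its own deposit: the penalty exposure $-\beta\alpha f_p V^*$ is linear and decreasing while the reward share saturates, so the optimum is interior and, crucially, \emph{strictly below} $V_{max}$ (this requires $f_p>0$ and $\beta>0$, and it is exactly what forces the strict loss reduction). Finally, equating the verifier and free-rider payoffs yields $\beta$ as $C$ over a net-benefit-of-verifying denominator.

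For the comparison, I would introduce $\epsilon:=\dfrac{T(V_{max}-V^*)}{2(V_{max}+V^*)}$, which one computes to be exactly the verify-minus-free-ride reward gap on correct blocks, and $D:=\delta S\big(1-\tfrac{\alpha}{2}\big)+\alpha f_p V^*-\tfrac{(1-\alpha)T}{2}$, the common net-benefit term. The validator indifference equation then reads $\beta_{nonKYC}=(C-\epsilon)/(D-\epsilon)$. Substituting the KYC fixed deposit $V=V^*$ into the Theorem~\ref{theorem2} formula collapses the reward gap to zero and gives $\beta_{KYC}=C/D$. Cross-multiplying (all denominators are positive in equilibrium) reduces $\tfrac{C-\epsilon}{D-\epsilon}<\tfrac{C}{D}$ to $-\epsilon D<-\epsilon C$, i.e. $C<D$, which is just $\beta_{KYC}<1$ and is automatic for a genuine probability. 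Hence $\beta_{nonKYC}<\beta_{KYC}$ and therefore $\mathcal{L}_{nonKYC}<\mathcal{L}_{KYC}$.

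I expect the main obstacle to be the bookkeeping that yields the clean reward-gap identity $b_V-b_F=T(V_{max}-V^*)/\big(2(V_{max}+V^*)\big)$ and the verification that the non-KYC $\beta$ shares the \emph{same} denominator $D$ as the KYC formula after setting $V=V^*$ — once these align, the final inequality is elementary monotonicity of $x\mapsto(C-\epsilon)/(D-\epsilon)$. A secondary point needing care is justifying $V^*<V_{max}$ rigorously from the free-rider's first-order condition (so that $\epsilon>0$ and the inequality is strict), together with checking the validity ranges $0<\epsilon<C<D$ that guarantee $\alpha,\beta,\beta_{nonKYC}$ are all legitimate probabilities.
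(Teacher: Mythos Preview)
Your approach is the paper's: aggregator indifference gives $\alpha=1-\sqrt{(B+S)/(Z+S)}$, validator indifference gives $\beta$, and the loss comparison reduces to $\beta_{KYC}>\beta_{nonKYC}$ at $V=V^*$. Your $\epsilon/D$ reduction of that last inequality to the automatic condition $\beta_{KYC}<1$ is in fact \emph{more} than the paper supplies --- the paper simply asserts $\beta_{KYC}>\beta_{nonKYC}$ without any algebraic justification, and does not address $V^*<V_{max}$ (hence $\epsilon>0$) at all --- so your proposal is both aligned with and tighter than the original.
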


The system loss is positively related to $\beta$ as, 
\begin{equation*}
\begin{aligned}
    \mathcal{L}=\beta\prod_{i=1}^n(1-\alpha_i)Z=\beta \frac{(S+B)Z}{S+Z}.
\end{aligned}
\end{equation*}

The aggregator's probability of attacking $\beta_{KYC}$ in the KYC scenario is, 
\begin{equation*}
\begin{aligned}
    \beta_{KYC} = \frac{C}{\delta S (1-\frac{1}{2} \alpha) + \alpha(\frac{1}{2}T+f_pV) - \frac{1}{2}T}.
\end{aligned}
\end{equation*}

The aggregator's probability of attacking $\beta_{KYC}$ in the non-KYC scenario is,
\begin{equation*}
\begin{aligned}
    \beta_{nonKYC}=\frac{C-\frac{V_{max}T}{V_{max}+V^*}+\frac{1}{2}T}{ \delta S(1-\frac{1}{2}\alpha)+\alpha(\frac{1}{2}T+f_pV^*)-\frac{V_{max}T}{V_m+V^*}}.
\end{aligned}
\end{equation*}

Since $\beta_{KYC} > \beta_{nonKYC}$ when the fixed deposit $V$ in the KYC scenario is selected as $V^*$ in nonKYC scenario, $\mathcal{L}_{KYC} > \mathcal{L}_{non-KYC}$.

\section{Breaking Tie}
The system loss primarily arises from the attack behavior of the aggregator. Suppose there are \( n \) validators, and the \( i \)-th validator has a probability \( \alpha_i \) of verifying.

\begin{equation}
    \mathcal{L} = \beta \prod_{i=1}^n (1 - \alpha_i) Z = \beta \frac{(S + B) Z}{S + Z}.
    \label{loss1}
\end{equation}

From Eq.(\ref{loss1}), we can see that \( \mathcal{L} \) increases as the value of \( Z \) increases. However, \( Z \) is uncontrollable by the system because \( Z \) is the block value chosen by the aggregator. In all before-mentioned cases, the aggregator's expected utility is indifferent with $Z$, which means that he does not have a preference to $Z$.

To achieve the purpose of breaking tie, we introduce a system interference term $D$ so that the penalty of attacking is related to the probability of the aggregator attacking $\beta$ and the validator verifying $\alpha$.

The new payoff matrix is shown in Table \ref{table:matrixd}. The reason why we only add interference term $D$ to the bottom right corner of the payoff matrix is that, only in this scenario the validator and aggregator have engaged in a battle on L1, which can be detected by the blockchain system. This mechanism can be implemented as follows: dynamically record the probability 
$p$ of the aggregator's attack being challenged. Each time this occurs, in addition to penalizing the aggregator, provide a return amount based on $p$ multiplied by a constant.

\begin{table}
    \centering
    \caption{Payoff matrix with a system interference term $D$}\label{table:matrixd}
    \begin{tabular}{|l|l|l|}
        \hline
        & $\overline A$ $(1 - \beta)$ & $A$ $(\beta)$ \\
        \hline
        $\overline {VC}(1-\alpha)$ & (B, $T$) & (Z, $T$) \\ 
        V $(\alpha)$ & (B, $T-C$) & $(-S- \alpha \beta D, \delta S - C$) \\ 
        \hline
    \end{tabular}
\end{table}

\begin{theorem} \label{theorem5}
    By adding a interference term $D$, the aggregator tends to choose a smaller $Z$.
\end{theorem}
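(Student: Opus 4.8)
The plan is to isolate precisely what the interference term $D$ changes relative to the baseline single-validator game of Theorem~\ref{theorem1}, and to show that it converts the aggregator's \emph{indifference} over $Z$ into a strict preference for small $Z$. First I would observe that in the augmented payoff matrix of Table~\ref{table:matrixd} the term $D$ enters only the aggregator's component of the (verify, attack) cell and leaves every validator payoff untouched. Consequently the validator's indifference condition between verifying and free-riding is identical to the one in the baseline analysis, and it pins the aggregator's attack probability at the fixed positive constant $\beta=\frac{C}{\delta S-T}$ (positive by the assumption $\delta S>T$), independent of both $Z$ and $D$.

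Next I would record the validator's equilibrium verification probability as a function of $Z$. As in the branch $C<\frac{(\delta S-T)(T+f_n V)}{\delta S+f_n V}$ of Theorem~\ref{theorem1}, verification intensity is tied to the block value through $\alpha=\frac{Z-B}{Z+S}$, which satisfies $\frac{d\alpha}{dZ}=\frac{S+B}{(Z+S)^2}>0$ and is therefore strictly increasing in $Z$; I would argue that the surcharge $\alpha\beta D$, being levied only after an attack has actually been challenged on L1, does not enter this verification equilibrium and so leaves $\alpha$ pinned to $Z$. With both probabilities in hand, the aggregator's expected payoff from attacking under the new matrix is
\begin{equation*}
U_{\mathcal A}(\mathrm{attack})=(1-\alpha)Z-\alpha S-\alpha^2\beta D.
\end{equation*}
Using the identity $(1-\alpha)Z-\alpha S=B$, which holds exactly at $\alpha=\frac{Z-B}{Z+S}$, this collapses to $U_{\mathcal A}(\mathrm{attack})=B-\alpha^2\beta D$. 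Since $\beta,D>0$ and $\alpha(Z)$ is strictly increasing, the term $\alpha^2\beta D$ is strictly increasing in $Z$, so $U_{\mathcal A}(\mathrm{attack})$ is strictly decreasing in $Z$; by contrast, setting $D=0$ recovers the constant value $B$, i.e.\ the original tie of Eq.~(\ref{loss1}). A payoff-maximizing aggregator is thus pushed toward the smallest admissible $Z$, which is the claim.

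The delicate step, and the one I would spend the most care on, is the self-referential character of the interference: the entry written in a single cell depends on the \emph{mixing} probabilities $\alpha$ and $\beta$, so the game is aggregative rather than a fixed bimatrix, and the textbook indifference argument would otherwise force the aggregator's equilibrium payoff to equal the not-attack payoff $B$ for every $Z$ — exactly the tie we are trying to break. The substance of the proof is therefore to make precise the timing/mechanism assumption under which $D$ acts as a post-contest surcharge that the validators' fast verification equilibrium does not absorb (keeping $\alpha$ tied to $Z$), yet which the aggregator internalizes at the stage of committing to $Z$ on-chain. Once that modeling point is fixed, checking $\frac{d\alpha}{dZ}>0$ and reading off the sign of $-\frac{d}{dZ}\!\left(\alpha^2\beta D\right)<0$ is routine.
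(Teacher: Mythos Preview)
Your argument misidentifies which equilibrium condition pins $\alpha$. In a mixed equilibrium, $\beta$ is fixed by the \emph{validator's} indifference (you have this right), but $\alpha$ is fixed by the \emph{aggregator's} optimality condition --- and it is precisely the aggregator's payoff that $D$ modifies. So the claim that the surcharge ``does not enter this verification equilibrium and so leaves $\alpha$ pinned'' at the baseline $\frac{Z-B}{Z+S}$ is backwards: once $D\neq 0$, the aggregator's condition changes and $\alpha$ changes with it. Using the baseline $\alpha$ together with the identity $(1-\alpha)Z-\alpha S=B$ is therefore illegitimate, and it leads you to the wrong sign: you conclude $D>0$ works, whereas the paper shows one needs $D<0$ (consistent with the later ``subsidizing caught aggregators'' suggestion).

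The tie-breaking mechanism the paper actually uses is one you do not exploit. Because the $(\text{verify},\text{attack})$ cell contains $-S-\alpha\beta D$, the aggregator's expected utility $E_A(\beta)=\beta[(1-\alpha)Z-\alpha S-\alpha^2\beta D]+(1-\beta)B$ is \emph{quadratic} in $\beta$; the aggregator is no longer indifferent but has an interior first-order condition $E_A'(\beta)=0$, namely $\alpha^2\beta D=\tfrac{1}{2}[(1-\alpha)Z-\alpha S-B]$. Imposing this FOC together with the validator's indifference $\beta=\tfrac{C}{\delta S-T}$ determines $\alpha$ implicitly as a function of $Z$, with $\alpha'(Z)=\frac{1-\alpha}{4\alpha\beta D+S+Z}>0$ for small $|D|$. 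Substituting back yields the equilibrium payoff $E_A=B+\alpha^2\beta^2 D$, whose $Z$-derivative $2\alpha\beta^2 D\,\alpha'(Z)$ has the sign of $D$. Your ``timing/mechanism'' paragraph correctly flags that the standard indifference argument would restore the tie, but the resolution you gesture at does not capture this quadratic structure; without it, both your value of $\alpha$ and your sign conclusion fail.
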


\begin{proof}

    

     The indifference condition for $\mathcal{V}$ remains unchanged, as the introduction of $D$ does not change the utility of the validator. Therefore, when reaching equilibrium, $\beta = \frac{C}{\delta S - T}$.

     The expected utility of the aggregator is consist of the expected utility of attacking and not attacking, that is $E_A=\beta((1-\alpha)Z+\alpha(-S-\alpha \beta D)) + (1-\beta)B$. 
     
     Fix $\alpha$ and other constants, the aggregator chooses $\beta$ that maximizes $E_A$, so $E_A'(\beta)=(-\alpha S - 2 {\alpha}^2 \beta D) + (1-\alpha)Z - B = 0$. That is,
     \begin{equation}
    \begin{aligned}
        {\alpha}^2 {\beta} D = \frac{(1-\alpha) Z - B - \alpha S}{2}.
        \label{abd1}
    \end{aligned}
    \end{equation}

    So the expected utility $E_A$ can be simplified to 
    \begin{equation}
    \begin{aligned}
        E_A(\beta)&=B-\frac{1}{2}\beta B-\frac{1}{2}\alpha\beta S+\frac{1}{2}(1-\alpha)\beta Z\\
        &= \frac{1}{2}\beta(2\alpha^2 \beta D-B)+B-\frac{1}{2}\beta B
        \label{eq_eaz}
    \end{aligned}
    \end{equation}

    Based on conditions above, we then study the optimal $Z$ that the aggregator chooses. Now $\alpha$ is a function of $Z$ (while $\beta$ is still irrelevant with $Z$).
    
    Take the derivative of $E_A$ with respect to $Z$ in Eq.(\ref{eq_eaz}), we have $E_A'(Z)=2\alpha\beta^2 D {\alpha}'$

    Take the derivative with respect to $Z$ in Eq.(\ref{abd1}), we have $4\alpha\beta D \alpha' + \alpha' S = 1-\alpha - \alpha'Z $, which implies $ \alpha' = \frac{1-\alpha}{4\alpha\beta D + S + Z} > 0$.

    Combining with the above conclusions, we can determine the positive or negative value of $E_A'(Z)$ by the positive or negative value of $D$. When we set $D$ to a negative value, $E_A'(Z) < 0$ holds. Therefore, the aggregator tends to choose a smaller Z, resulting in decreasing the system loss.
\end{proof}

It is worth noting that the system interference term \( D \) can be set to a very small value, so the introduction of \( D \) does not affect the behavior of other parties.

\section{Suggestion}
Based on the discussion above, we propose several suggestions for the system to maximize its benefits. These recommendations aim to optimize system parameters and mitigate potential losses from malicious behavior by aggregators.

\subsubsection{Subsidizing caught aggregators} For the system, we suggest that the system add an interference term $D$ providing subsidy to the caught aggregators. Because the benefits of the aggregator are independent of $Z$, while system loss $\mathcal{L}$ are positively correlated with $Z$. By introducing $D$, the aggregators prefer to choose a smaller $Z$ explained in Section 6.

\begin{figure}[h]
    \begin{minipage}{0.32\linewidth}
		\vspace{3pt}
		\centerline{\includegraphics[width=\textwidth]{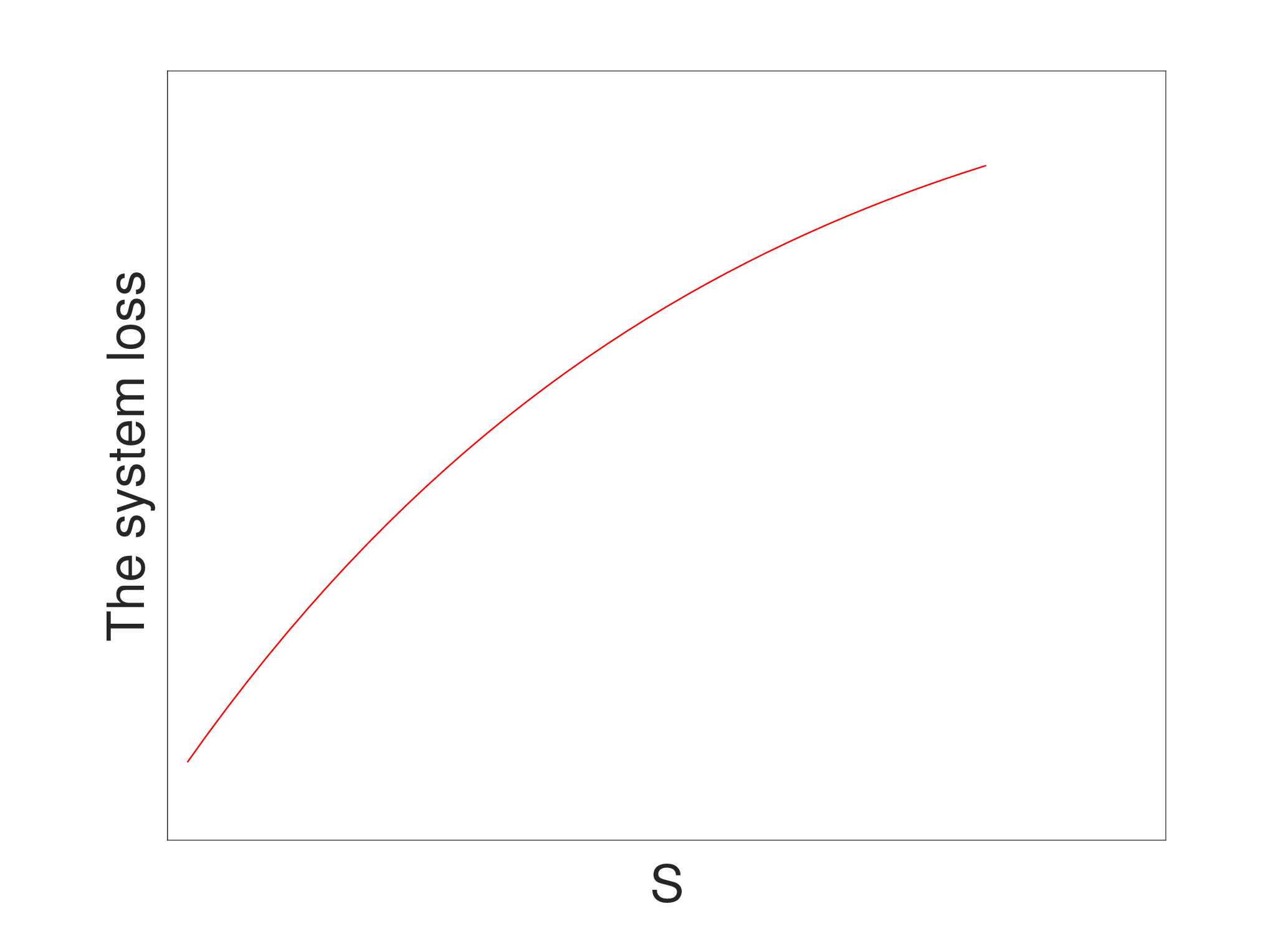}}
		\caption{The system loss $\mathcal{L}$ changes by $S$}
            \label{lossbys}
	\end{minipage}
	\begin{minipage}{0.32\linewidth}
		\vspace{3pt}
		\centerline{\includegraphics[width=\textwidth]{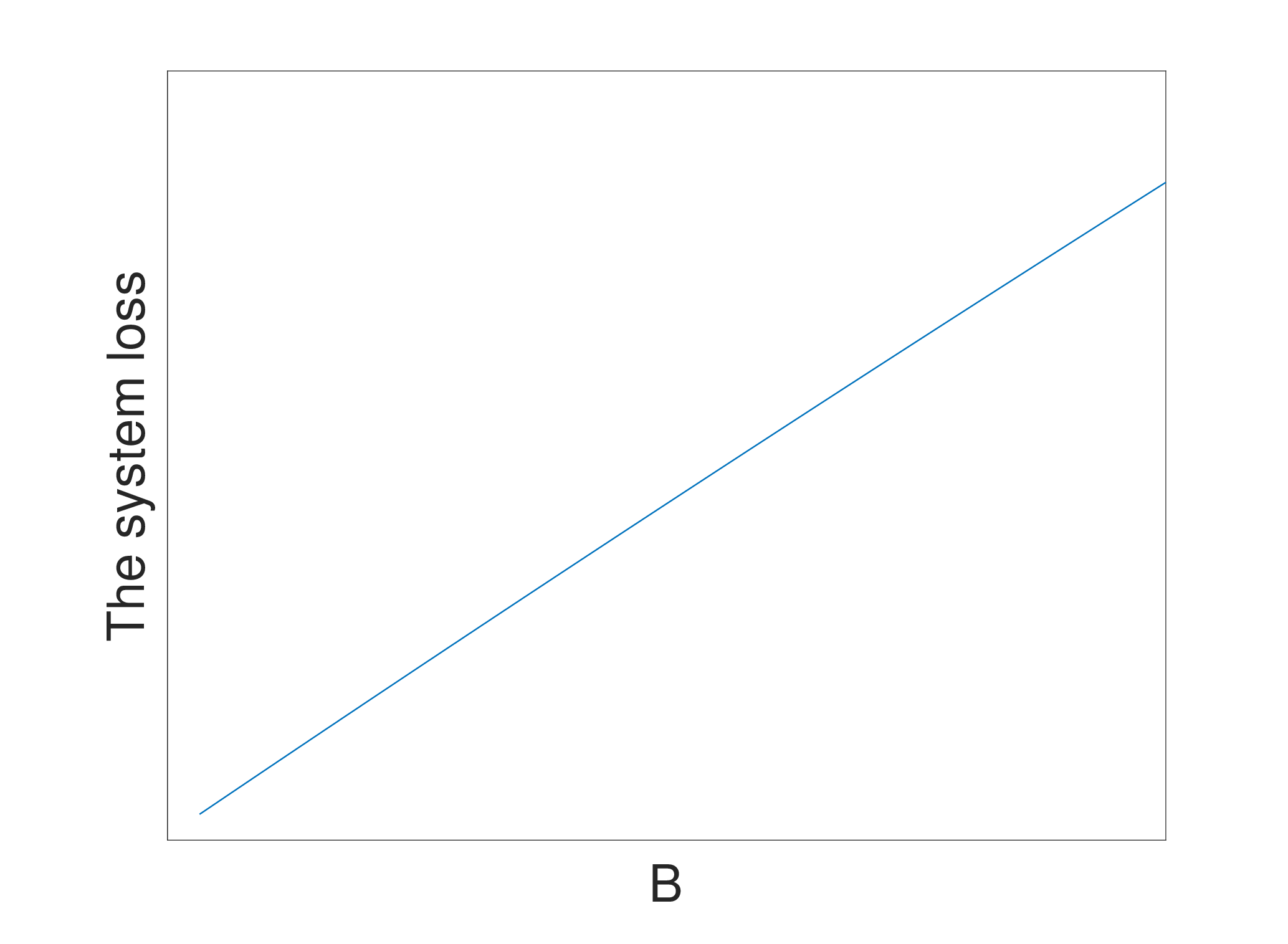}}
            \caption{The system loss $\mathcal{L}$ changes by $B$}
            \label{lossbyb}
	\end{minipage}
	\begin{minipage}{0.32\linewidth}
		\vspace{3pt}
		\centerline{\includegraphics[width=\textwidth]{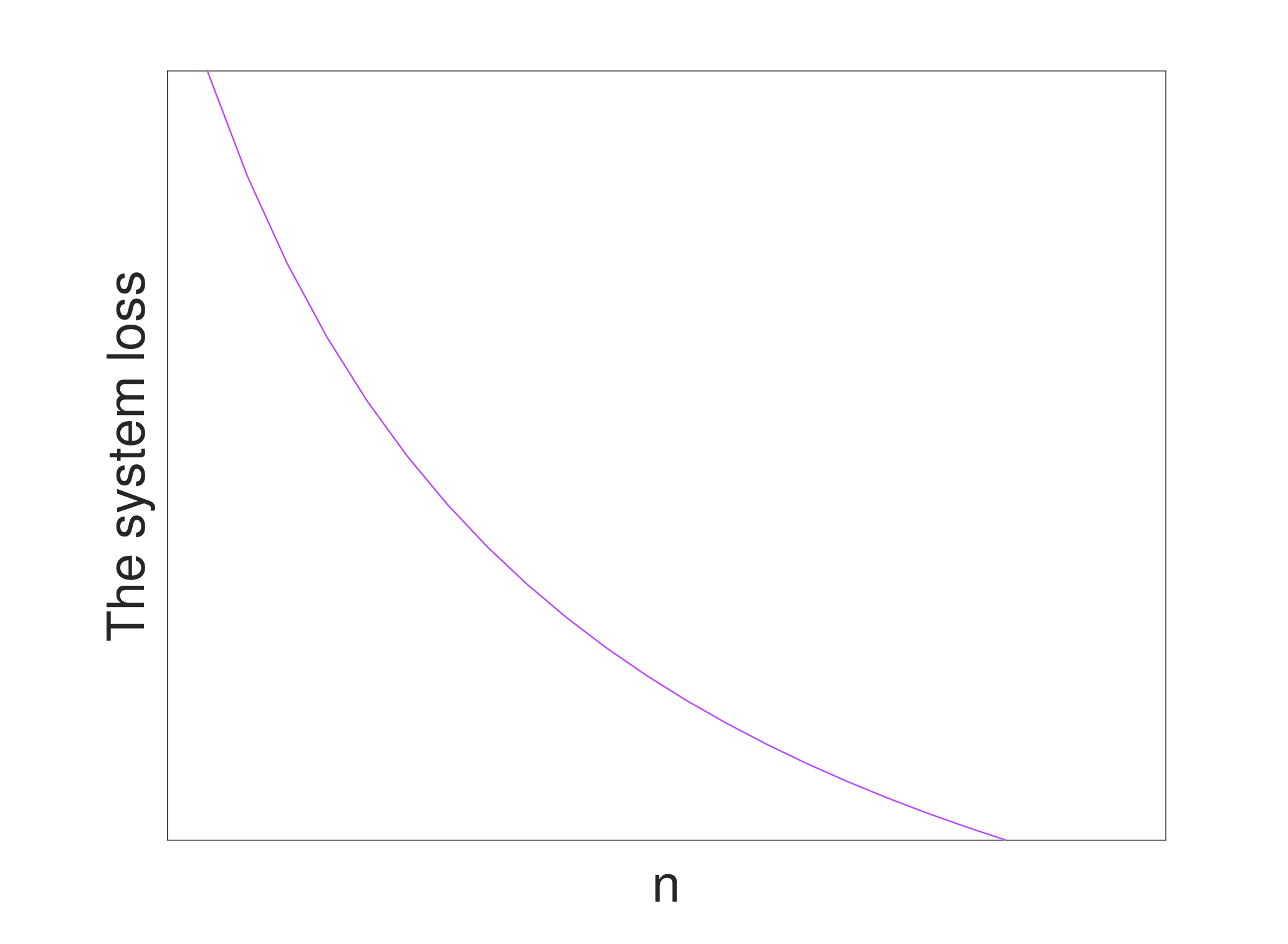}}
            \caption{The system loss $\mathcal{L}$ changes by $n$}
	    \label{lossbyn}
	\end{minipage}
\end{figure}

\subsubsection{Decrease $R$} In the previous section, we discussed the equilibria of games with multiple validators. We observe the rise of more better equilibria characterized by smaller system loss as $R$ decreases. Therefore, we consider reducing the value of $R$ from multiple aspects.

\subsubsection{Decerase $S$} Considering the scenario of multiple validators, the worst system loss $\mathcal{L}$ of multiple validators is, 
\begin{equation}
    \mathcal{L} = \frac{C}{P_n \delta S - Q_n (f_pV + \frac{T}{n}) + f_pV} \cdot \frac{(S+B)Z}{S+Z}.
    \label{lossnv}
\end{equation} 

Treating the system loss $\mathcal{L}$ as a function of deposit $S$, we make a graph of the system loss $\mathcal{L}$ as $S$ changes, shown in Figure.~\ref{lossbys}. We can learn that as $S$ increases, the system loss $\mathcal{L}$ increases. Therefore, we can decrease $S$ in order to decrease the system loss of the worst equilibrium.

It is worth noting that though decreasing $S$ will increase $R$, their  optimization directions are different: decreasing $R$ raises more better equilibrium, while decreasing $S$ reduces the system loss of the worst case equilibrium.

\subsubsection{Decrease $B$} Since \( P_n \) and \( Q_n \) are related to \( B \), we can treat the system loss \( \mathcal{L} \) as a function of the reward \( B \). Figure~\ref{lossbyb} illustrates how the system loss changes as \( B \) varies. By decreasing \( B \), we can reduce the system loss in the worst equilibrium.

\subsubsection{Increase $n$} Increasing the number of validators \( n \) will also affect the values of \( P_n \) and \( Q_n \), thereby increasing \( \Gamma_n \). This adjustment is more conducive to achieving more better equilibrium. Figure~\ref{lossbyn} shows the change in system loss with varying \( n \). Increasing \( n \) can also reduce the system loss in the worst equilibrium. Additionally, this also helps decrease $R$, resulting in better equilibria.

\subsubsection{Decrease $C$} From Eq.~(\ref{lossnv}), it is evident that decreasing the verification cost \( C \) for the verifier also helps to decrease the system loss, which demonstrates that technological progress can improve productivity.

\subsubsection{Increase $V$ and $f_p$} Since $Q_n<1$, $\mathcal{L}$ in Eq.\ref{lossnv} is decreasing with $f_pV$. Similar to decrease $S$, increasing \( f_pV \) also raises \( R \), resulted in different optimization directions.


\subsubsection{Increase $\delta$} Increasing \( \delta \) helps to decrease the worst system loss. At the same time, increasing \( \delta \) can decrease \( R \), raising more better equilibria.

\subsubsection{Decrease $T$} $\mathcal{L}$ in Eq.\ref{lossnv} is decreasing with $T$. Therefore, decreasing \( T \) can reduce the system loss \( \mathcal{L} \). Additionally, this also helps decrease \( R \), raising more better equilibria.

%
%
%
%

\section*{Appendix}
\subsubsection{The proof of Lemma~\ref{lemma1}}
\begin{proof}
    For any role in the game, when he chooses a behavior of his action space as pure-strategy player, he can always achieve higher return by changing his behavior.

    For the aggregator, when $\mathcal A$ is the pure-strategy of attacking, $\mathcal V$ is the pure-strategy of chance-taker when reaching maximum revenue. In this case, $\mathcal A$ will get a higher return when he changes his behavior to not attack. when $\mathcal A$ chooses not to attack, $\mathcal V$ is the pure-strategy of free-rider when reaching maximum revenue. $\mathcal A$ will get a higher return when he changes his behavior to attack.

    For the validators, when $\mathcal{V}$ is the honest verifier, $\mathcal{A}$ maximum its revenue when not attacking. Then $\mathcal{V}$ gets a higher revenue when changing his behavior to free-rider. But when $\mathcal{V}$ is the free-rider, $\mathcal{A}$ maximum its revenue when attacking. Then $\mathcal{V}$ gets a higher revenue when changing his behavior to chance-taker. When $\mathcal{V}$ is the chance-taker, $\mathcal{A}$ maximum its revenue when not attacking. Then $\mathcal{V}$ gets a higher revenue when changing his behavior to free-ride.
    
    Above all, there is no Nash Equilibrium when both $\mathcal{A}$ and $\mathcal{V}$ play the pure strategy.
\end{proof}

\subsubsection{The proof of Theorem~\ref{theorem1}}
\begin{proof}
    Since there is no pure-strategy equilibrium in the game, we consider mixed strategy equilibrium.

    First, we consider the case that $\mathcal A$ plays the mixed strategy that he attacks with a probability $\beta$ and $\mathcal V$ plays the mixed strategy that he chooses to be a chance-taker or a free-rider. Table~\ref{table:matrixalpha0} shows the payoff matrix in this case.

    \begin{table}[!ht]
        \centering
        \caption{Payoff Matrix When $\alpha$ = 0}
        \label{table:matrixalpha0}
        \begin{tabular}{|l|l|l|}
            \hline
             & $\overline A$ $(1 - \beta)$ & A $(\beta)$ \\ 
             \hline
            $\overline {VC}$ $(1-\gamma)$ & (B, T) & (Z, T) \\ 
            $\overline V C$  ($\gamma$) & $(B+\lambda f_n V, -f_n V)$ & $(-S, \delta S)$ \\ 
            \hline
        \end{tabular}
    \end{table}

    The indifference condition for $\mathcal A$ is that the expected utility of attacking is equal to the expected utility of not attacking. That is,

    \begin{equation*}
    \begin{aligned}
        \gamma(-S)+(1-\gamma)Z = \gamma(B+\lambda f_n V)+(1-\gamma)B, 
    \end{aligned}
    \end{equation*}

    equivalent to
    \begin{equation*}
    \begin{aligned}
        \gamma = \frac{Z-B}{Z+S+\lambda f_n V}.
    \end{aligned}
    \end{equation*}

    The indifference condition for $\mathcal V$ is that the expected utility of chance-taker is equal to the expected utility of free-rider. That is,

    \begin{equation*}
    \begin{aligned}
        (1-\beta)(-f_n V)+\beta \delta S = T,
    \end{aligned}
    \end{equation*}

    equivalent to
    \begin{equation*}
    \begin{aligned}
        \beta = \frac{T+f_n V}{\delta S+f_n V}.
    \end{aligned}
    \end{equation*}

    And both utilities are greater than the utility of honest verifier, that is, 

    \begin{equation*}
    \begin{aligned}
        T \ge (1-\beta)(T-C)+\beta(\delta S-C),
    \end{aligned}
    \end{equation*}

    equivalent to
    \begin{equation*}
    \begin{aligned}
        C \ge \frac{(\delta S-T)(T+f_n V)}{\delta S+f_n V}.
    \end{aligned}
    \end{equation*}

    So when $\displaystyle C \ge \frac{(\delta S-T)(T+f_n V)}{\delta S+f_n V}$, there is a Nash Equilibrium that $\mathcal A$ attacks or not with a probability $\displaystyle \beta = \frac{T+f_n V}{\delta S+f_n V}$ and $\mathcal V$ plays as a chance-taker with a probability $\displaystyle \gamma = \frac{Z-B}{Z+S+\lambda f_n V}$ or a free-rider with a probability $1-\gamma$.

    Also, we reach a similar conclusion when $\mathcal V$ plays the mixed strategy that he chooses to be a free-rider or a verifier. Table~\ref{table:matrixgamma0} shows the payoff matrix in this case.

     \begin{table}[!ht]
        \centering
        \caption{Payoff Martix When $\gamma$ = 0}
        \label{table:matrixgamma0}
        \begin{tabular}{|l|l|l|}
            \hline
            & $\overline A$ $(1 - \beta)$ & A $(\beta)$ \\ 
            \hline
            $\overline {VC}$ $(1-\alpha)$ & $(B, T)$ & $(Z, T)$ \\ 
            V $(\alpha)$ & $(B, T-C)$ & $(-S, \delta S - C$) \\ 
            \hline
        \end{tabular}
    \end{table}

    The indifference condition for $\mathcal A$ is that the expected utility of attacking is equal to the expected utility of not attacking. That is,

    \begin{equation*}
    \begin{aligned}
        \alpha(-S)+(1-\alpha)Z=B, 
    \end{aligned}
    \end{equation*}

    equivalent to
    \begin{equation*}
    \begin{aligned}
        \alpha = \frac{Z-B}{Z+S}.
    \end{aligned}
    \end{equation*}

    The indifference condition for $\mathcal V$ is that the expected utility of verifier is equal to the expected utility of free-rider. That is,

    \begin{equation*}
    \begin{aligned}
        \beta(\delta S - C)+(1-\beta)(T -C) = T,
    \end{aligned}
    \end{equation*}

    equivalent to
    \begin{equation*}
    \begin{aligned}
        \beta = \frac{C}{\delta S-T}.
    \end{aligned}
    \end{equation*}

    And both utilities are greater than the utility of honest verifier, that is, 

    \begin{equation*}
    \begin{aligned}
        T \ge (1-\beta)(-f_n V)+\beta \delta S,
    \end{aligned}
    \end{equation*}

    equivalent to
    \begin{equation*}
    \begin{aligned}
        C \le \frac{(\delta S-T)(T+f_n V)}{\delta S+f_n V}.
    \end{aligned}
    \end{equation*}

    So when $\displaystyle C \le \frac{(\delta S-T)(T+f_n V)}{\delta S+f_n V}$, there is a Nash Equilibrium that $\mathcal A$ attacks or not with a probability $\displaystyle \beta = \frac{C}{\delta S-T}$ and $\mathcal V$ plays as a verifier with a probability $\displaystyle \alpha = \frac{Z-B}{Z+S}$ or a free-rider with a probability $1-\alpha$.

    Especially, when $\displaystyle C=\frac{(\delta S-T)(T+f_n V)}{\delta S+f_n V}$, there is an additional Nash Equilibrium. Table~\ref{table:matrixall} shows the payoff matrix with all supports.

    \begin{table}[!ht]
        \centering
        \caption{Payoff Matrix With all supports}
        \begin{tabular}{|l|l|l|}
            \hline
            & $\overline A$ $(1 - \beta)$ & A $(\beta)$ \\ 
            \hline
            $\overline {VC}$ $(1-\alpha-\gamma)$ & $(B, T)$ & $(Z, T)$ \\ 
            $\overline VC$  ($\gamma$) & $(B+\lambda f_n V, -f_n V)$ & $(-S, \delta S)$ \\ 
            V $(\alpha)$ & $(B, T-C)$ & $(-S, \delta S - C)$ \\ 
            \hline
        \end{tabular}
        \label{table:matrixall}
    \end{table}
\end{proof} 

\subsubsection{The proof of Theorem~\ref{theorem2}}
\begin{proof}
    Suppose there are two validators, both of which choose to play a mixed strategy that $\mathcal V_1$/$\mathcal V_2$ verifies with a probability $\alpha_1$/$\alpha_2$, or with a probability $(1-\alpha_1)$/$(1-\alpha_2)$ as a free-rider. Table~\ref{table:kyc1} presents a payoff matrix that summarizes all players’ payoffs from their strategic interactions as described above.

    \begin{table}[ht]
    \centering
    \caption{Payoff matrix $(\mathcal A, \mathcal V_1, \mathcal V_2)$ when $\mathcal V_1$ and $\mathcal V_2$ play a mixed strategy of free-rider and verifier with a probability $\alpha_1/\alpha_2$}
    \renewcommand\arraystretch{1.5}
    \begin{tabular}{|l|l|l|}
        \hline
         & $\overline A$ $(1 - \beta)$ & A $(\beta)$ \\ 
         \hline
         $(V_1(\alpha_1), V_2(\alpha_2))$ & $(B, \frac{T}{2}-C, \frac{T}{2}-C)$ & $(-S, \frac{\delta S}{2}-C, \frac{\delta S}{2}-C)$ \\ 
        $(\overline{V_1} (1-\alpha_1), V_2(\alpha_2))$ & $(B, \frac{T}{2}, \frac{T}{2}-C)$ & $(-S, -f_p V, \delta S -C)$ \\ 
        $(V_1(\alpha_1), \overline{V_2} (1-\alpha_2))$ & $(B, \frac{T}{2}-C, \frac{T}{2})$ & $(-S, \delta S -C, -f_p V)$ \\ 
        $(\overline {V_1}(1-\alpha_1), \overline {V_2}(1-\alpha_2)) $ & $(B, \frac{T}{2}, \frac{T}{2})$ & $(Z, \frac{T}{2}, \frac{T}{2})$ \\ 
        \hline
    \end{tabular}
    \label{table:kyc1}
    \end{table}

    The indifference condition for $\mathcal A$ is that the expected utility of attacking is equal to the expected utility of not attacking. That is,

    \begin{equation}
    \begin{aligned}
        (1-\alpha_1)(1-\alpha_2)Z - (1-(1-\alpha_1)(1-\alpha_2))S = B.
        \label{eq2ea}
    \end{aligned}
    \end{equation}

    The indifference condition for $\mathcal V_1$ is that the expected utility of verifier is equal to the expected utility of free-rider. That is,

    \begin{equation}
    \begin{aligned}
        \beta(\alpha_2 \frac{1}{2}\delta S + (1-\alpha_2) \delta S) + (1-\beta)(\frac{1}{2}T) -C = (1-\alpha_2 \beta) \frac{1}{2}T-\alpha_2 \beta f_p V.
        \label{eq2ev1}
    \end{aligned}
    \end{equation}

    Similarly, the indifference condition for $\mathcal V_2$ is that the expected utility of verifier is equal to the expected utility of free-rider. That is,

    \begin{equation}
    \begin{aligned}
        \beta(\alpha_1 \frac{1}{2}\delta S + (1-\alpha_1) \delta S) + (1-\beta)(\frac{1}{2}T) -C = (1-\alpha_1 \beta) \frac{1}{2}T-\alpha_1 \beta f_p V.
        \label{eq2ev2}
    \end{aligned}
    \end{equation}

    By combining Eq.(\ref{eq2ea}), Eq.(\ref{eq2ev1}) and Eq.(\ref{eq2ev2}), we can obtain, 

    \begin{equation*}
    \begin{aligned}
        \alpha_1 = \alpha_2 = 1-\sqrt{\frac{B+S}{Z+S}}\\
        \beta_1 = \frac{C}{\delta S (1-\frac{1}{2} \alpha) + \alpha(\frac{1}{2}T+f_pV) - \frac{1}{2}T}
    \end{aligned} 
    \end{equation*}

    So in any conditions, there is a Nash Equilibrium that ($\mathcal V_1$/$\mathcal V_2$) plays the mixed strategy that he verifies with a probability ($\alpha_1$/$\alpha_2$) and $\mathcal A$ attacks with probability $\beta$.

     Then we consider the case that one of the validators plays as a free-rider. Suppose $\mathcal V_2$ is the free-rider, Table~\ref{table:kyc2} presents a payoff matrix that summarizes all players’ payoffs from their strategic interactions.

    \begin{table}[ht]
    \centering
    \caption{Payoff matrix $(\mathcal A, \mathcal V_1, \mathcal V_2)$ when $\mathcal V_1$ plays a mixed strategy of free-rider and verifier with a probability $\alpha$ and $\mathcal V_2$ plays as a free-rider}
    \renewcommand\arraystretch{1.5}
    \begin{tabular}{|l|l|l|}
        \hline
         & $\overline A$ $(1 - \beta)$ & A $(\beta)$ \\ 
         \hline
        $(V_1(\alpha), \overline{V_2})$ & $(B, \frac{T}{2}-C, \frac{T}{2})$ & $(-S, \delta S -C, -f_p V)$ \\ 
        $(\overline {V_1}(1-\alpha), \overline {V_2}) $ & $(B, \frac{T}{2}, \frac{T}{2})$ & $(Z, \frac{T}{2}, \frac{T}{2})$ \\ 
        \hline
    \end{tabular}
    \label{table:kyc2}
    \end{table}
    
    Similarly, the indifference condition for $\mathcal A$ is that the expected utility of attacking is equal to the expected utility of not attacking. That is,

    \begin{equation}
    \begin{aligned}
        (1-\alpha)Z - \alpha S = B.
        \label{eq3ea}
    \end{aligned}
    \end{equation}

    The indifference condition for $\mathcal V_1$ is that the expected utility of verifier is equal to the expected utility of free-rider. That is,

    \begin{equation}
    \begin{aligned}
        \beta\delta S + (1-\beta)(\frac{1}{2}T) -C = \frac{1}{2}T.
        \label{eq3ev1}
    \end{aligned}
    \end{equation}

    The indifference condition for $\mathcal V_2$ is that the expected utility of free-rider is greater than the expected utility of verifier. That is,

    \begin{equation}
    \begin{aligned}
        (1-\alpha\beta)\frac{T}{2} - \beta\alpha f_p V \ge
        \beta(\alpha \frac{1}{2}\delta S + (1-\alpha) \delta S) + (1-\beta)(\frac{1}{2}T) -C 
        \label{eq3ev2}
    \end{aligned}
    \end{equation}

    By combining Eq.(\ref{eq3ea}), Eq.(\ref{eq3ev1}) and Eq.(\ref{eq3ev2}), we can obtain, 

    \begin{equation*}
      \begin{aligned}
        \alpha &= \frac{Z-B}{Z+S}\\
        \beta_2 &= \frac{C}{\delta S - \frac{1}{2}T} \\
        \frac{1}{2} &\ge \frac{ \frac{1}{2}T+f_pV}{\delta S} 
      \end{aligned}
    \end{equation*}

     So when $\displaystyle R =  \frac{ \frac{1}{2}T+f_pV}{\delta S} \le \frac{1}{2}$, there is an equilibrium that $\mathcal V_1$ plays the mixed strategy that he verifies with a probability $\alpha_1$ or not, and $\mathcal V_2$ plays as a free-rider.  
\end{proof}

\subsubsection{The proof of Lemma~\ref{lemma2}}
\begin{proof}
    Take any two validators among the $m$ mixed strategy validators, denoted as $V_1$ and $V_2$ respectively, then the probabilities of verifying are $\alpha_1$ and $\alpha_2$ respectively. Then denote $P_i$ as the probability that there are $i$ validators verifying among the remaining $m-2$ validators. The validator $V_1$ is indifferent between honest verifier and free-rider, that is,

    \begin{equation*}
    \begin{aligned}
        &\beta[\alpha_2 \sum_{k=0}^{m-2}P_k\frac{\delta S}{k+2}+(1-\alpha_2)\sum_{k=0}^{m-2}P_k\frac{\delta S}{k+1}]+(1-\beta)\frac{T}{n} -C \\
        = &  \beta(1-P_0(1-\alpha_2))(-f_pV) + (1-\beta(1-P_0(1-\alpha_2)))\frac{T}{n},
    \end{aligned}
    \label{eqev1}
    \end{equation*}

    which simplifies to 

    \begin{equation}
    \begin{aligned}
       & \beta[\sum_{k=0}^{m-2}P_k\frac{\delta S}{k+2}-\sum_{k=0}^{m-2}P_k\frac{\delta S}{k+1}+P_0f_pV+P_0\frac{T}{n}]\alpha_2 \\
       =& C-\beta [\sum_{k=0}^{m-2}\frac{P_k}{k+1}\delta S - P_0(f_pV+\frac{T}{n}) + f_pV]\\
    \end{aligned}
    \label{eqev1s}
    \end{equation}
    
    Eq.(\ref{eqev1s}) is also to hold for the validator $V_2$ and other validators.

    Denote $g(\alpha_3,\alpha_4, \dots, \alpha_m)$ as the right side of Eq.(\ref{eqev1s}). Then for $\forall (m-2)$ validators among $m$ validators, $g(\alpha_{i_1},\alpha_{i_2}, \dots, \alpha_{i_{m-2}})=0$ is satisfied when the two validators' probability $\alpha_1 \ne \alpha_2$. We assume that there are only two probabilities among m validators. A certain number of $k$ validators in $m$ validators verify with probability $\alpha_1$, and the remaining $m-k$ validators in $m$ validators verify with probability $\alpha_2$. That is, for anyone of $m$ validators, the remaining $m-1$ validators must contain a subset of $m-2$ validators such that there are $k-1$ mixed strategy validators with probability $\alpha_1$ and $m-k-1$ mixed strategy validators with probability $\alpha_2$. Especially, when $\alpha_1 = \alpha_2$, it is the case that all mixed strategy validators are symmetric.

    Now we are considering situations where there are more than two probabilities among m validators. Assume that there are an another probability $\alpha_3$. We define that a certain number of $k$ validators in $m$ validators verify with probability $\alpha_1$, a certain number of $l$ validators in $m$ validators verify with probability $\alpha_2$, and the remaining $m-k-l$ validators in $m$ validators verify with probability $\alpha_3$. And 
$\alpha_1 \ne \alpha_2 \ne \alpha_3$. 

    The validator with probability $\alpha_1$ is indifferent between honest verifier and free-rider, that is,

    \begin{equation}
    \begin{aligned}
        &\beta[\alpha_2\alpha_3 \sum_{k=0}^{m-3}P_k\frac{\delta S}{k+3}+(1-\alpha_2)\alpha_3\sum_{k=0}^{m-3}P_k\frac{\delta S}{k+2}\\
        &+(1-\alpha_3)\alpha_2\sum_{k=0}^{m-3}P_k\frac{\delta S}{k+2} + (1-\alpha_2)(1-\alpha_3)\sum_{k=0}^{m-3}P_k\frac{\delta S}{k+1}]+(1-\beta)\frac{T}{n} -C \\
        = &\beta(1-P_0(1-\alpha_2)(1-\alpha_3)(-f_pV) + (1-\beta(1-P_0(1-\alpha_2)(1-\alpha_3))\frac{T}{n}
    \end{aligned}
    \label{eqe123}
    \end{equation}

    Denote $\displaystyle \sum_{k=0}^{m-3}\frac{P_k}{k+1} = E_1$, $\displaystyle \sum_{k=0}^{m-3}\frac{P_k}{k+2} = E_2$, $\displaystyle \sum_{k=0}^{m-3}\frac{P_k}{k+3} = E_3$, Eq.(\ref{eqe123}) can be simplified as:

    \begin{equation}
    \begin{aligned}
        &[(E_2-E_1)\delta S + P_0(f_pV+\frac{T}{n})]\alpha_2 + [(E_2-E_1)\delta S + P_0(f_pV+\frac{T}{n})]\alpha_3 \\
        +& [(E_3-2E_2+E_1)\delta S - P_0(f_pV+\frac{T}{n})]\alpha_2\alpha_3\\
        =&\frac{C}{\beta} - [f_pV+P_0(f_pV+\frac{T}{n})+E_1\delta S]
    \end{aligned}
    \label{eqe123s}
    \end{equation}

    Similarly, for someone with probability $\alpha_2$, there is:
    \begin{equation}
    \begin{aligned}
        &[(E_2-E_1)\delta S + P_0(f_pV+\frac{T}{n})]\alpha_1 + [(E_2-E_1)\delta S + P_0(f_pV+\frac{T}{n})]\alpha_3 \\
        +& [(E_3-2E_2+E_1)\delta S - P_0(f_pV+\frac{T}{n})]\alpha_1\alpha_3\\
        =&\frac{C}{\beta} - [f_pV+P_0(f_pV+\frac{T}{n})+E_1\delta S]
    \end{aligned}
    \label{eqe13s}
    \end{equation}

    For $\alpha_1 \ne \alpha_2$, Eq.(\ref{eqe123s}) and Eq.(\ref{eqe13s}) hold. So

    \begin{equation}
    \begin{aligned}
        [(E_2-E_1)\delta S + P_0(f_pV+\frac{T}{n})] + [(E_3-2E_2+E_1)\delta S - P_0(f_pV+\frac{T}{n})]\alpha_3 = 0
    \end{aligned}
    \label{eqalpha}
    \end{equation}

    Eq.(\ref{eqalpha}) holds for $\alpha_1$, $\alpha_2$ and $\alpha_3$, so:
    \begin{equation}
    \left\{
    \begin{aligned}
        [(E_2-E_1)\delta S + P_0(f_pV+\frac{T}{n})] + [(E_3-2E_2+E_1)\delta S - P_0(f_pV+\frac{T}{n})]\alpha_1 = 0 \\
        [(E_2-E_1)\delta S + P_0(f_pV+\frac{T}{n})] + [(E_3-2E_2+E_1)\delta S - P_0(f_pV+\frac{T}{n})]\alpha_2 = 0\\
        [(E_2-E_1)\delta S + P_0(f_pV+\frac{T}{n})] + [(E_3-2E_2+E_1)\delta S - P_0(f_pV+\frac{T}{n})]\alpha_3 = 0
    \end{aligned}
    \right.
    \end{equation}

    So we can learn that:

    \begin{equation}
    \left\{
    \begin{aligned}
        (E_2-E_1)\delta S + P_0(f_pV+\frac{T}{n}) = 0 \\
        (E_3-2E_2+E_1)\delta S - P_0(f_pV+\frac{T}{n}) = 0
    \end{aligned}
    \right.
    \label{eqe2e3}
    \end{equation}

    When Eq.(\ref{eqe2e3}) holds, we can refer that $E_2 = E_3$. But by definition, $E_2$ and $E_3$ are not equal, so there are three different probabilities among m validators that do not hold true.
\end{proof}

\subsubsection{The proof of Lemma~\ref{lemma3}}
\begin{proof}
The validator of mixed strategy is indifferent between honest verifier and free-rider, that is, 
\begin{equation*}
\begin{aligned}
    \beta [\sum _{k=0} ^{m-1} {\frac{P_k}{k+1}} ]\delta S + (1-\beta)\frac{T}{n} -C = (1-\beta(1-P_0)) \frac{T}{n} - \beta (1-P_0) f_p V.
\end{aligned}
\end{equation*}

which simplifies to 
\begin{equation}
\begin{aligned}
    \beta(\sum _{k=0} ^{m-1} \frac{P_k}{k+1} \delta S - \frac{P_0T}{n}+(1-P_0)f_p V) = C. \label{eqem}
\end{aligned}
\end{equation}

Similarly, we consider one of the n-m validators that play the pure strategy as the free-rider. The expected utility of free-rider behavior is greater than that of verifier behavior, that is 
\begin{equation*}
\begin{aligned}
    &(1-\beta(1-P_0(1-\alpha))) \frac{T}{n} - \beta (1-P_0(1-\alpha)) f_p V \\
    \ge& \beta  [\sum _{k=0} ^{m-1} {\frac{P_k \alpha}{k+2}} + \sum _{k=0} ^{m-1} {\frac{P_k (1-\alpha)}{k+1}}]\delta S + (1-\beta)\frac{T}{n} -C,
\end{aligned}
\label{eqen-m}
\end{equation*}

which simplifies to 
\begin{equation}
\begin{aligned}
    \beta(\sum _{k=0} ^{m-1} [\frac{P_k\alpha}{k+2} + \frac{P_k(1-\alpha)}{k+1}]\delta S - \frac{P_0T(1-\alpha)}{n} + (1-P_0(1-\alpha))f_p V) \le C \label{eqea}
\end{aligned}
\end{equation}

Assuming that all validators play the mixed strategy, there is no free-rider among the validators, so the condition that the expected utility of free-rider is greater than that of verifier does not hold for the case $m=n$. 

Considering the symmetric case where all mixed strategy validators adopt the same strategy with probability $\alpha_m$, we can simplify Eq.~(\ref{eqem}) to:

\begin{equation}
\begin{aligned}
    \beta (P_m \delta S - Q_m (f_pV+\frac{T}{n}) + f_pV) = C.
    \label{em}
\end{aligned}
\end{equation}

In a similar manner, Eq.~(\ref{eqea}) can be simplified to:
\begin{equation}
\begin{aligned}
    \beta(\frac{1-A(1-\alpha_m)}{(m+1)\alpha_m}\delta S - A(f_pV + \frac{T}{n}) + f_pV) \le C \label{en-m}.
\end{aligned}
\end{equation}

By combining Eq.~(\ref{em}) and Eq.~(\ref{en-m}), we derive the following inequality:
\begin{equation}
\left[\frac{1}{m(m+1)}\left(\frac{1}{A}-1\right)-\frac{\alpha_m}{m+1}\right]\frac{1-\alpha_m}{\alpha_m^2} \delta S \ge f_pV+\frac{T}{n}.
\label{gammam}
\end{equation}

Since \(\displaystyle \Gamma_m = \left[\frac{1}{m(m+1)}\left(\frac{1}{A}-1\right)-\frac{\alpha_m}{m+1}\right]\frac{1-\alpha_m}{\alpha_m^2}\), then Eq.~(\ref{gammam}) can be simplified to:

\begin{equation*}
R \le \Gamma_m.
\end{equation*}

Also, the aggregator is indifferent between attacking and not attacking, that is, 

\begin{equation*}
\begin{aligned}
    P_0(1-\alpha_m) Z - (1-P_0(1-\alpha_m))S  = B,
\end{aligned}
\end{equation*}

which simplifies to 
\begin{equation*}
\begin{aligned}
    (1-\alpha_m)^m = \frac{B+S}{Z+S} = A.
\end{aligned}
\end{equation*}

\end{proof}

\subsubsection{The proof of Theorem~\ref{main_theorem}}
\begin{proof}
    Based on Lemma~\ref{lemma_Gamma}, if $R < \Gamma_m$ holds for a given $m$, it also holds that $R < \Gamma_{m+1}$. This leads to the inequality $R < \Gamma_m < \Gamma_{m+1} < \dots < \Gamma_n$, indicating that if an $m$-NE exists, then $(m+1)$-NE, ..., $n$-NE also exist.

    Then we search for the optimal equilibrium. To minimize the system loss \(L\), we need to minimize the value of \(\beta\). Suppose that when there are \(m\) mixed strategy validators, \(\beta\) attains its minimum value, denoted as \(\beta_m\). Therefore, we have:
    \begin{equation*}
    \begin{aligned}
    \beta_m \le \beta_{m+1} & \iff P_m \delta S - Q_m (f_pV+\frac{T}{n}) \ge P_{m+1} \delta S - Q_{m+1} (f_pV+\frac{T}{n}) \\
    & \iff \frac{\frac{T}{n} + f_pV}{\delta S} \le \frac{P_m - P_{m+1}}{Q_m - Q_{m+1}} = \Delta_m.
    \end{aligned}
    \end{equation*}
    
    Similarly, we have:
    \begin{equation*}
    \begin{aligned}
    \beta_m \le \beta_{m-1} & \iff P_m \delta S - Q_m (f_pV+\frac{T}{n}) \le P_{m-1} \delta S - Q_{m-1} (f_pV+\frac{T}{n}) \\ 
    & \iff \frac{\frac{T}{n} + f_pV}{\delta S} \ge \frac{P_{m-1} - P_m}{Q_{m-1} - Q_m} = \Delta_{m-1}.
    \end{aligned}
    \end{equation*}
    
    When $\Delta_{m-1} < R < \Delta_m$, $\beta_m$ reaches its minimum value, resulting in the minimum system loss.

    \begin{figure*}[htp]
        \centering
        \includegraphics[scale=0.3]{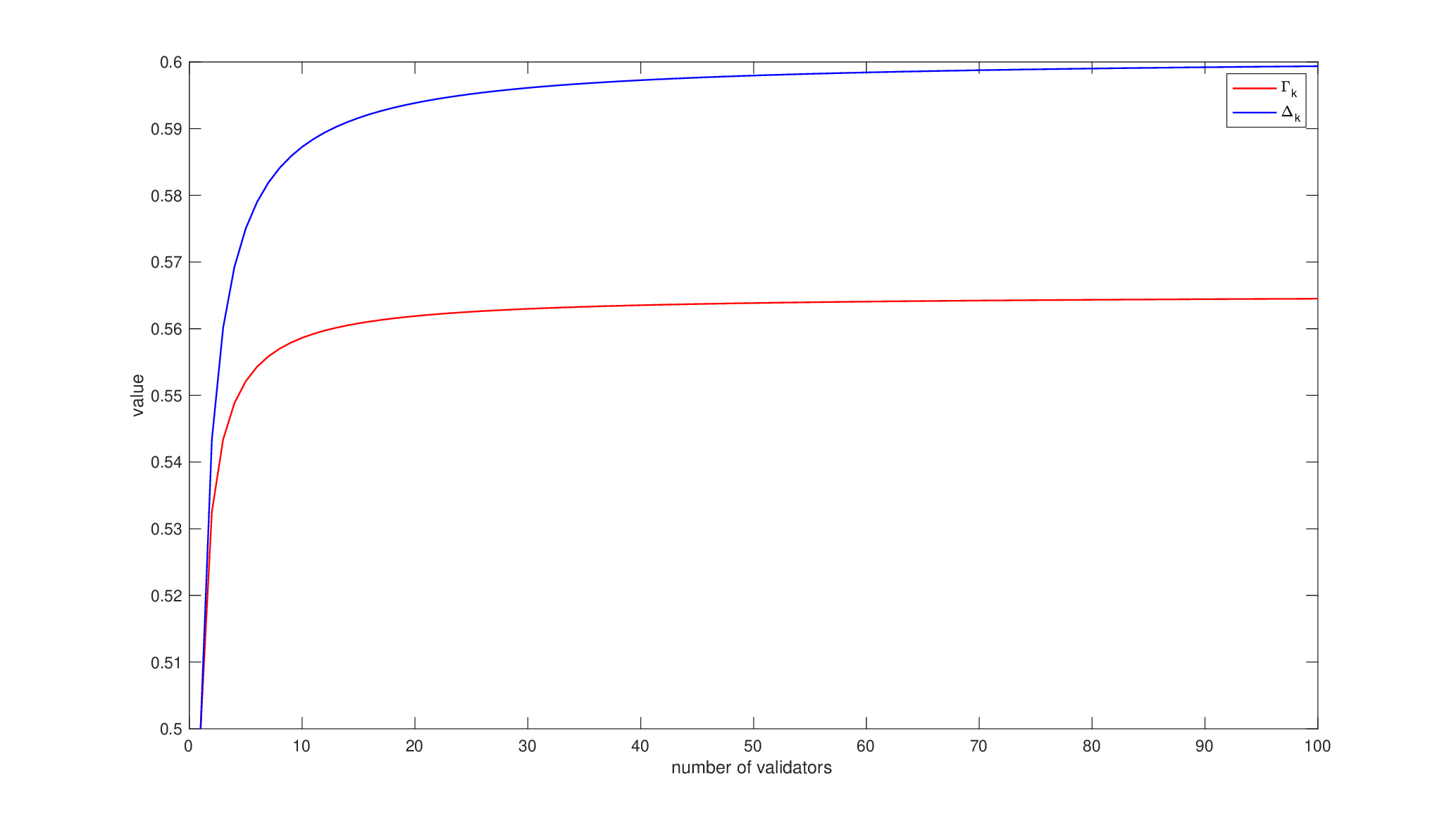}
        \caption{The image of \textcolor{blue}{$\Delta_m$} and \textcolor{red}{$\Gamma_m$}. The x-coordinate represents the number of mixed strategy validators. We set A to 0.7}
        \label{fig:delta_gamma}
    \end{figure*}
    
    If \(R\) lies between \(\Gamma_{m-1}\) and \(\Gamma_m\), all possible equilibriums consist of the following $n-m$ equilibrium: $m$-NE, $(m+1)$-NE, ..., $n$-NE. Among these $n-m$ equilibriums, the $m$-NE is optimal, as \(R < \Delta_m < \cdots < \Delta_n\). This ordering implies that \(\beta_k = \beta_m < \beta_{m+1} < \cdots < \beta_n\).

    This completes the proof of Theorem~\ref{main_theorem}.
\end{proof}

\subsubsection{The proof of Lemma~\ref{lemma7}}
\begin{proof}
For any of the \(k\) validators, they are indifferent between acting as the verifier and as the free-rider. This condition can be expressed as:
\begin{equation}
\begin{aligned}
    &\beta\left[\sum_{i=0}^{k-1} \binom{k-1}{i} \alpha_1^i (1-\alpha_1)^{k-1-i} 
    \sum_{j=0}^{m-k} \binom{m-k}{j} \alpha_2^j (1-\alpha_2)^{m-k-j} \frac{\delta S}{i+j+1}\right] + (1-\beta)\frac{T}{n} - C \\
    &= \beta\left[1-(1-\alpha_1)^{k-1}(1-\alpha_2)^{m-k}\right](-f_p V) + \left[1-\beta\left(1-(1-\alpha_1)^{k-1}(1-\alpha_2)^{m-k}\right)\right]\frac{T}{n}.
    \label{f21}
\end{aligned}
\end{equation}

Similarly, for the \(m-k\) validators, the condition of indifference between acting as a verifier and as a free-rider is given by:

\begin{equation}
\begin{aligned}
    &\beta\left[\sum_{i=0}^{k} \binom{k}{i} \alpha_1^i (1-\alpha_1)^{k-i} 
    \sum_{j=0}^{m-k-1} \binom{m-k-1}{j} \alpha_2^j (1-\alpha_2)^{m-k-1-j} \frac{\delta S}{i+j+1}\right] + (1-\beta)\frac{T}{n} - C \\
    &= \beta\left[1-(1-\alpha_1)^k(1-\alpha_2)^{m-k-1}\right](-f_p V) + \left[1-\beta\left(1-(1-\alpha_1)^k(1-\alpha_2)^{m-k-1}\right)\right]\frac{T}{n}.
    \label{f22}
\end{aligned}
\end{equation}

To simplify these equations, we introduce the following notation:

\begin{equation*}
\begin{aligned}
    p_1 &= \sum_{i=0}^{k-1} \binom{k-1}{i} \alpha_1^i (1-\alpha_1)^{k-1-i} 
    \sum_{j=0}^{m-k} \binom{m-k}{j} \alpha_2^j (1-\alpha_2)^{m-k-j} \frac{1}{i+j+1} \\
    &= \alpha_2\left(\sum_{i=0}^{k-1} \binom{k-1}{i} \alpha_1^i (1-\alpha_1)^{k-1-i} 
    \sum_{j=0}^{m-k-1} \binom{m-k-1}{j} \alpha_2^j (1-\alpha_2)^{m-k-1-j} \frac{1}{i+j+2}\right) \\
    &\quad + (1-\alpha_2)\left(\sum_{i=0}^{k-1} \binom{k-1}{i} \alpha_1^i (1-\alpha_1)^{k-1-i} 
    \sum_{j=0}^{m-k-1} \binom{m-k-1}{j} \alpha_2^j (1-\alpha_2)^{m-k-1-j} \frac{1}{i+j+1}\right),
\end{aligned}
\end{equation*}

\begin{equation*}
\begin{aligned}
    p_2 &= \sum_{i=0}^{k} \binom{k}{i} \alpha_1^i (1-\alpha_1)^{k-i} 
    \sum_{j=0}^{m-k-1} \binom{m-k-1}{j} \alpha_2^j (1-\alpha_2)^{m-k-1-j}\frac{1}{i+j+1} \\
    &= \alpha_1\left(\sum_{i=0}^{k-1} \binom{k-1}{i} \alpha_1^i (1-\alpha_1)^{k-1-i} 
    \sum_{j=0}^{m-k-1} \binom{m-k-1}{j} \alpha_2^j (1-\alpha_2)^{m-k-1-j} \frac{1}{i+j+2}\right) \\
    &\quad + (1-\alpha_1)\left(\sum_{i=0}^{k-1} \binom{k-1}{i} \alpha_1^i (1-\alpha_1)^{k-1-i} 
    \sum_{j=0}^{m-k-1} \binom{m-k-1}{j} \alpha_2^j (1-\alpha_2)^{m-k-1-j} \frac{1}{i+j+1}\right).
\end{aligned}
\end{equation*}

Let $\displaystyle p_3 = \sum_{i=0}^{k-1}C_{k-1}^i \alpha_1 ^i (1-\alpha_1)^{k-1-i} 
\sum_{j=0}^{m-k-1}C_{m-k-1}^j \alpha_2 ^j (1-\alpha_2)^{m-k-1-j} \frac{1}{i+j+2}$, 
$\displaystyle p_4 = \sum_{i=0}^{k-1}C_{k-1}^i \alpha_1 ^i (1-\alpha_1)^{k-1-i} 
\sum_{j=0}^{m-k-1}C_{m-k-1}^j \alpha_2 ^j (1-\alpha_2)^{m-k-1-j} \frac{1}{i+j+1}$.

Then, \(p_1\) and \(p_2\) can be simplified to:

\begin{equation*}
\begin{aligned}
    p_1 &= \alpha_2 p_3 + (1-\alpha_2) p_4, \\
    p_2 &= \alpha_1 p_3 + (1-\alpha_1) p_4.
\end{aligned}
\end{equation*}

Given that \((1-\alpha_1)^k (1-\alpha_2)^{m-k} = A\), let \(p_5 = (1-\alpha_1)^{k-1} (1-\alpha_2)^{m-k-1}\). Then, Eq.~(\ref{f21}) and Eq.~(\ref{f22}) can be simplified to:

\begin{subequations}
\begin{align}
    \beta\left[(\alpha_2 p_3 + (1-\alpha_2) p_4) \delta S + f_p V - (1-\alpha_2) p_5 (f_p V + \frac{T}{n})\right] = C, \label{eq38a} \\
    \beta\left[(\alpha_1 p_3 + (1-\alpha_1) p_4) \delta S + f_p V - (1-\alpha_1) p_5 (f_p V + \frac{T}{n})\right] = C. \label{eq38b}
\end{align}
\end{subequations}

Subtracting Eq.~(\ref{eq38b}) from Eq.~(\ref{eq38a}) yields:

\begin{equation}
\begin{aligned}
    \left[(p_3 - p_4) \delta S + p_5 (f_p V + \frac{T}{n})\right](\alpha_2 - \alpha_1) = \frac{C}{\beta} - p_4 \delta S - f_p V + p_5 (f_p V + \frac{T}{n}).
    \label{eqp}
\end{aligned}
\end{equation}

In the asymmetric case, we assume \(\alpha_1 \ne \alpha_2\), so:

\begin{equation*}
\begin{aligned}
    (p_3 - p_4) \delta S + p_5 (f_p V + \frac{T}{n}) = 0,  \\
    \frac{C}{\beta} - p_4 \delta S - f_p V + p_5 (f_p V + \frac{T}{n}) = 0. 
\end{aligned}
\end{equation*}

For the aggregator, indifference between attacking and not attacking is given by:

\begin{equation*}
\begin{aligned}
    (1-\alpha_1)^k (1-\alpha_2)^{m-k} Z - \left[1 - (1-\alpha_1)^k (1-\alpha_2)^{m-k}\right] S = B.
\end{aligned}
\end{equation*}
\end{proof}

\subsubsection{The proof of Theorem~\ref{theorem4}}
\begin{proof}
It is indifferent between varifier and free-rider for both validators, that is, 
\begin{equation}
\begin{aligned}
    & (1-\beta)\alpha_2\frac{1}{2}T + (1-\beta)(1-\alpha_2)\frac{V_{max}T}{V_{max}+{V_2}^*} + \beta(1-\alpha_2)\delta S + \alpha_2 \beta \frac{1}{2}\delta S -C  \\
    = & (1-\beta)\alpha_2\frac{TV_1^*}{V_1^*+V_{max}} + (1-\beta)(1-\alpha_2)\frac{V_1^*T}{V_1^*+{V_2}^*} + (1-\alpha_2)\beta\frac{{V_1^*}T}{{V_1^*}+{V_2}^*} - \alpha_2\beta f_pV_1^*, 
    \label{eqev1=ev1'}
\end{aligned}
\end{equation}

\begin{equation}
\begin{aligned}
    & (1-\beta)\alpha_1\frac{1}{2}T + (1-\beta)(1-\alpha_1)\frac{V_{max}T}{{V_1}^* + V_{max}} + \beta(1-\alpha_1)\delta S + \alpha_1 \beta \frac{1}{2}\delta S -C \\
    = & (1-\beta)\alpha_1\frac{TV_2^*}{V_2^*+V_{max}} + (1-\beta)(1-\alpha_1)\frac{V_2^*T}{V_2^*+{V_1}^*} + (1-\alpha_1)\beta\frac{{V_2^*}T}{{V_2^*}+{V_1}^*} - \alpha_1\beta f_pV_2^*
    \label{eqev2=ev2'}
\end{aligned}
\end{equation}

For the aggregator, it is indifferent between attacking and not attacking, that is, 
\begin{equation}
\begin{aligned}
    (1-\alpha_1)(1-\alpha_2)Z - (1-(1-\alpha_1)(1-\alpha_2))S = B
    \label{eqea=ea'}
\end{aligned}
\end{equation}

Assuming that validtor $P_1$ and $P_2$ are symmetric, Eq.(\ref{eqev1=ev1'}) and Eq.(\ref{eqev2=ev2'}) can be simplified into one equation, that is, 
\begin{equation}
\begin{aligned}
    & (1-\beta)\alpha\frac{1}{2}T + (1-\beta)(1-\alpha)\frac{V_{max}T}{V_{max}+{V}^*} + \beta(1-\alpha)\delta S + \alpha \beta \frac{1}{2}\delta S -C\\
    = &(1-\beta)\alpha\frac{TV}{V+V_{max}} + (1-\alpha)\frac{T}{2} - \alpha\beta f_pV.
    \label{eqev=ev'}
\end{aligned}
\end{equation}

    Combining Eq.(\ref{eqev=ev'}) and Eq.(\ref{eqea=ea'}), we get, 
    \begin{equation*}
    \begin{aligned}
        \alpha=1-\sqrt{\frac{B+S}{Z+S}},
    \end{aligned}
    \end{equation*}
    \begin{equation*}
    \begin{aligned}
        \beta=\frac{C-\frac{V_{max}T}{V_{max}+V^*}+\frac{1}{2}T}{ \delta S(1-\frac{1}{2}\alpha)+\alpha(\frac{1}{2}T+f_pV^*)-\frac{V_{max}T}{V_m+V^*}}.
    \end{aligned}
    \end{equation*}

The system loss is positively related to $\beta$, since, 
\begin{equation*}
\begin{aligned}
    \mathcal{L}=\beta\prod_{i=1}^n(1-\alpha_i)Z=\beta \frac{(S+B)Z}{S+Z}.
\end{aligned}
\end{equation*}

The aggregator's probability of attacking $\beta_{KYC}$ in the KYC scenario is, 
\begin{equation*}
\begin{aligned}
    \beta_{KYC} = \frac{C}{\delta S (1-\frac{1}{2} \alpha) + \alpha(\frac{1}{2}T+f_pV) - \frac{1}{2}T}.
\end{aligned}
\end{equation*}

And The aggregator's probability of attacking $\beta_{KYC}$ in the non-KYC scenario is,
\begin{equation*}
\begin{aligned}
    \beta_{nonKYC}=\frac{C-\frac{V_{max}T}{V_{max}+V^*}+\frac{1}{2}T}{ \delta S(1-\frac{1}{2}\alpha)+\alpha(\frac{1}{2}T+f_pV^*)-\frac{V_{max}T}{V_m+V^*}}.
\end{aligned}
\end{equation*}

Since $\beta_{KYC} > \beta_{nonKYC}$ when the fixed deposit $V$ in the KYC scenario is selected as $V^*$ in non-KYC scenario, $\mathcal{L}_{KYC} > \mathcal{L}_{non-KYC}$.
\end{proof}

Due to space limitations, Lemma~\ref{lemma4}, Lemma~\ref{lemma5}, and Lemma~\ref{lemma6}, which are purely mathematical proofs, are omitted here.
\end{document}